\newtheorem{theorem}{Theorem}
\newtheorem{problem}{Problem}
\newtheorem{fact}{Fact}
\newcommand{\cCW}{{\mathcal {CW}}}
\newcommand{\cW}{{\mathcal {W}}}
\def\beq{\begin{equation}}\def\eeq{\end{equation}}\usepackage[T1]{fontenc}
\journal{Journal of Information Sciences}
\begin{document}
	
	\begin{frontmatter}
		
		\title{Combinatorial Trace Method for Network Immunization}

		\author[lumsAuthorAddress]{Muhammad Ahmad}\ead{17030056@lums.edu.pk}
		\author[lumsAuthorAddress]{Sarwan Ali}\ead{16030030@lums.edu.pk}
		\author[lumsAuthorAddress]{Juvaria Tariq}\ead{14070004@lums.edu.pk}
		\author[lumsAuthorAddress]{Imdadullah Khan\corref{mycorrespondingauthor} }\ead{imdad.khan@lums.edu.pk}
		\author[ituAuthorAddress]{Mudassir Shabbir}\ead{mudassir.shabbir@itu.edu.pk}
		\author[lumsAuthorAddress]{Arif Zaman}\ead{arifz@lums.edu.pk}
		\cortext[mycorrespondingauthor]{Corresponding author}
		\address[lumsAuthorAddress]{Lahore University of Management Sciences, Pakistan}
		\address[ituAuthorAddress]{Information Technology University, Pakistan}
		
		\begin{abstract}
			Immunizing a subset of nodes in a network - enabling them to identify and withstand the spread of harmful content - is one of the most effective ways to counter the spread of malicious content. It has applications in network security, public health policy, and social media surveillance. Finding a subset of nodes whose immunization results in the least vulnerability of the network is a computationally challenging task. In this work, we establish a relationship between a widely used network vulnerability measure and the combinatorial properties of networks. Using this relationship and graph summarization techniques, we propose an efficient approximation algorithm to find a set of nodes to immunize. We provide theoretical justifications for the proposed solution and analytical bounds on the runtime of our algorithm. We empirically demonstrate on various real-world networks that the performance of our algorithm is an order of magnitude better than the state of the art solution. We also show that in practice the runtime of our algorithm is significantly lower than that of the best-known solution.
		\end{abstract}
		
		\begin{keyword}
			Network Immunization \sep Spectral Methods \sep Combinatorial Trace \sep Eigendrop \sep Closed Walks
		\end{keyword}
		
	\end{frontmatter}
	
	
	
	\section{Introduction}
	Graphs or networks are used to model many practical scenarios involving pairwise interactions between entities. The entities could be humans, computers, mobile devices, power components, etc. while interactions can be face-to-face meetings, email and SMS communication and various kind of flows e.g. electric current in a power infrastructure network or fluid in pipelines. Many of the practical networks are very large with millions of nodes and edges.
	
	Every interaction in such large networks can not be monitored and there is a possibility of undesired and potentially harmful communication taking place among entities in networks. Such undesired spread could be intentional or un-intentional entailing various degrees of harms. The unintentional spread of flu-virus, for instance, may be life-threatening and may cause an epidemic. A rumor, on the other hand, may well be originated intentionally and its effect might be limited to a particular segment of a network. An effective way to safeguard a network against the spread of malicious content is to {\em empower} the nodes. The strengthening process may amount to vaccinating people, deploying surveillance systems at junctures and installing anti-virus software on computers depending on the underlying network. The nodes with these added capabilities will be referred to as the {\em immunized} nodes and the malicious content, as the \textit{virus}. Effectively, when a node is immunized, it will neither get contaminated nor will it pass the contaminant to other nodes.
	
	There is a cost associated with immunization, hence it is not feasible to immunize all nodes in large networks. The problem to select a subset of nodes (not exceeding a given budget) for immunization that will maximally hinder the virus spread is called the {\em Network Immunization Problem} and is abstractly formulated in  \cite{chen2016node} as follows: 
	\begin{problem}
		\label{problem:1}
		Given an undirected graph $G = (V,E)$, $|V|=n$ and an integer $k<n$, find a subset $S$ of $k$ nodes such that {\em immunizing} nodes in $S$, renders $G$ the least {\em `vulnerable'} to a virus attack over all choices of $S$.
	\end{problem}
	
	This requires a quantitative measure for the vulnerability of the graph. As in the literature \cite{chen2016node,chakrabarti2008epidemic,ahmad2017spectral,TariqAKS17}, we use the largest eigenvalue of the adjacency matrix of the graph to quantify the graph vulnerability. The objective in Problem \ref{problem:1}, therefore becomes that of immunizing a fixed-sized subset so as the remaining graph has the minimum largest eigenvalue. More precisely,     
	\begin{problem}
		\label{problem:2}
		Given an undirected graph $G = (V,E)$, $|V|=n$ and an integer $k<n$, find a subset $S$ of $k$ nodes such that the largest eigenvalue of the adjacency matrix of $G-S$ (the matrix after removing the rows and columns corresponding to $S$) is minimum over all choices of $S$.
	\end{problem}
	
	A score function was proposed in \cite{ahmad2017spectral,TariqAKS17}, for a subset of nodes based on the number of small length closed walks a node is contained in. The number of fixed length closed walks containing a node co-relates with the node's contribution towards the largest eigenvalue of the adjacency matrix of the graph. However, while the longer walks provide a better approximation, only shorter walks were considered due to time complexity. In this work, we propose a randomized approximation approach to address the time complexity issue, and extend to walks of length $8$, resulting in considerable improvement in accuracy. Formally, the contribution of this work can be summarized as follows: 
	\begin{itemize}\setlength{\itemsep=2pt}		
		\item We extend the score function based on the number of closed walks of length $8$ for sets of nodes that quantify the importance of sets to reduce the graph vulnerability defined in \cite{ahmad2017spectral}. This score function is monotonically non-decreasing and sub-modular that enables employing greedily constructing a set with improved approximation quality 
		\item We derive a closed-form formula to compute the number of walks of length $8$ passing through a node which may be of independent interest. We also give an approximate method that closely estimates the number of walks of length $8$ passing through a node
		\item We evaluate the quality of our solution on several real-world graphs. We show that our approximate method is a close estimate of the exact solution. Results show that our approach maximally reduces the virus spread and the vulnerability (the largest eigenvalue) of the immunized graph.  Moreover, our algorithm is scalable on large graphs and has a lower runtime based on the approximation parameters used. Comparisons also demonstrate that our approach outperforms the state-of-art methods both in terms of quality and runtime
		
	\end{itemize}
	
	The rest of the paper is organized as follows. In Section \ref{section:related_work}, we discuss the related work and give the  background of the problem in Section \ref{section:background}. In Section \ref{section:shieldvalue}, we present our solution along with its analysis. We report experimental results and comparisons with the existing solution in Section \ref{section:experimentalevaluation}.
	

	
	\section{Related Work}\label{section:related_work}
	
	Information spread in networks is widely studied in epidemiology, sociology and information sciences. Researchers are usually interested in estimating the extent to which a contagion will affect the population, predicting the timeline of infection and methods for containing or limiting the effect. The spreading process is studied on a network: agents are represented by nodes and the potential spread of information between a pair of agents is modeled by the presence of an edge between the corresponding pair of nodes. Popular models assume the knowledge of an infection rate $\beta$ (the rate at which an individual/agent accepts content from its neighbors) and a rate of recovery $\delta$ (the rate at which an individual/agent loses content). A relation between spread rate of virus and the largest eigenvalue of adjacency matrix $A$ of the graph,  $\lambda_{max}(A)$, was established in \cite{wang2003epidemic,ganesh2005effect}. In particular, they showed that if $\beta< \delta/\lambda_{max}(A)$, then the infection dies out in sub-linear time with respect to the size of the population following a stochastic model. Similarly, an exponential lower bound on expected die-out time or time for full network recovery (i.e. $\ge e^{cN}$ where $c$ is a constant dependent on the infection rate and $N$ is the size of population)  is also known when $\beta >  \delta/\lambda_{max}(A)$ \cite{van2014exact, van2014upper}. Recent works of \cite{ahn2013global,khanafer2014stability,khanafer2014stability02} established a similar relation of infection and recovery rates with $\lambda_{max}$ for infection spread or die-out while approximating the stochastic model by a deterministic one.

	Various studies have proposed preemptive methods to control virus spread and avoid a potential outbreak of contagion. These methods remove a subset of nodes or edges from the graph, so the remaining graph has the least $\lambda_{max}$. This problem has been shown to be \textsc{Np-Complete} in \cite{chen2016node,van2011decreasing}. An approximation scheme to select nodes for immunization based on eigenvector corresponding to $\lambda_{max}$ of the graph is devised in \cite{chen2016node}.
	
	A combinatorial trace method is adopted in \cite{ ahmad2017spectral,TariqAKS17,Ahmad2016AusDM} to select a subset of nodes whose removal will result in the maximum reduction in the $\lambda_{max}$. The trace of a large power of an adjacency matrix, $A^p$, is closely related to the $\lambda_{max}(A)$ (also known as the spectral radius of the graph) \cite{Abbas2017SRE}. Trace of $A^p$, on the other hand, is just the count of the number of closed walks of length $p$ in the graph. Approximation algorithms are given in \cite{ahmad2017spectral,Ahmad2016AusDM} to select nodes removing which will eliminate the most number of closed walks of length $4$ from the graph. Approximation of the number of closed walks of length $6$ containing a node using a randomly constructed summary of a graph is given in \cite{TariqAKS17}. In this paper, we extend this work by considering walks of length $8$ that leads to improved quality. We note that more sophisticated techniques from graph summarization literature \cite{ lefevre2010grass, riondato2014graph,Riondato2017graph, Beg2018Summarization} could be utilized to improve this work.

	Edge removal techniques are also devised to minimize graph vulnerability. Methods for selecting edges whose removal will reduce $\lambda_{max}$ the most are devised in \cite{kuhlman2013blocking, tong2012gelling}. In \cite{kuhlman2013blocking} virus spread is modeled by the dynamical system and the {\rm transition function} which defines the interaction of a node with its neighbors and state of each node (healthy or infected)   in order to reduce $\lambda_{max}$.
	
	In another line of work, non-preemptive techniques are devised in \cite{zhang2014dava,zhang2014scalable,song2015node} to immunize select nodes after the virus spread has started and the healthy and infected nodes are known. In this setting, methods are evaluated by {\em save ratio (SR)}: the ratio of the number of affected nodes in a graph when $k$ nodes are immunized to the number of infected nodes in case of no immunization.
	
	A reverse engineering technique is used to identify the nodes in a graph where the virus spread is initiated \cite{prakash2012spotting}. A related problem is to decontaminate the graph by deploying {\em cleaning agents} at certain nodes that travel along edges. Monotonicity is assumed in \cite{bienstockseymour1991,flocchini2008,flocchini2007,fraigniaudnisse2008} that a node cleaned by the agent will not get affected again. Non-monotonic strategies are given in \cite{daadaa2016network}.
	
	Some other problems related to graph immunization include the influence maximization \cite{morone2015influence}, the filter placement  \cite{erdos2012filter} and the critical node detection problem (CNDP) \cite{arulselvan2007managing,li2019bi,ventresca2015efficiently}. In the influence maximization problem, the goal is to find a subset of nodes whose \textit{activation} will lead to the maximal spread of information across the graph. The filter placement problem deals with minimizing the multiplicity of information flowing across the network. In CNDP, the goal is to identify nodes whose removal results in maximum graph fragmentation.
	

	
	\section{Preliminaries}\label{section:background}
	
	In this section, we formulate the immunization problem. Given a simple graph $G=(V,E)$, the goal is to select a subset $S$ of $k$ nodes such that removing $S$ from the graph maximally reduces the largest eigenvalue of the remaining graph denoted by $\lambda_{max}(A|_{-S})$. Since $\lambda_{max}$ can be computed in $O(|E|)$, the optimal subset of nodes can be found by iterating through each of the ${n \choose k}$ subsets. The overall runtime of this brute force algorithm is $O({n \choose k}\cdot |E|)$ rendering it computationally infeasible even for moderately large graphs.  
	
	Indeed, it turns out that Problem~\ref{problem:2} is \textsc{NP-Hard}. A reduction from \textit{Minimum Vertex Cover Problem} is as follows: 
	if there exists a set $S$ with $|S|=k$ such that $\lambda_{max}(A|_{-S})=0$, then $S$ is a vertex cover of the graph. It follows from the following implication of famous \textit{Perron-Frobenius theorem}: 
	
	\begin{fact}
		Deleting an edge from a simple connected graph $G$ strictly decreases the largest eigenvalue of the corresponding adjacency matrix \cite{frobenius}. 
	\end{fact}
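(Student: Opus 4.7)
The plan is to prove this fact by combining the Rayleigh quotient characterization of $\lambda_{max}$ with the strict positivity statement in the Perron--Frobenius theorem. Let $A$ denote the adjacency matrix of $G$, let $A'$ be obtained from $A$ by zeroing the entries $A_{uv}=A_{vu}=1$ corresponding to the deleted edge $\{u,v\}$, and write $\lambda=\lambda_{max}(A)$.

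First I would establish the non-strict inequality $\lambda_{max}(A')\leq\lambda$. Since $A'$ is symmetric and entrywise non-negative, the Rayleigh principle gives $\lambda_{max}(A')=\max_{\|y\|=1}y^TA'y$, and the maximum is attained at some $y\geq 0$: any optimizer can be replaced by its entrywise absolute value, which can only increase the quadratic form because $A'$ is non-negative. A direct computation then yields $y^TAy-y^TA'y=2y_uy_v\geq 0$, so $y^TA'y\leq y^TAy\leq\lambda$.

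To upgrade this to a strict inequality I would argue by contradiction: assume $\lambda_{max}(A')=\lambda$. Then the chain above collapses to equality, which simultaneously forces $y^TAy=\lambda$ and $y_uy_v=0$. Because the unit vector $y$ attains the Rayleigh maximum for $A$, it must be an eigenvector of $A$ for the eigenvalue $\lambda$. Since $G$ is connected, $A$ is irreducible, so the Perron--Frobenius theorem implies that any non-negative eigenvector of $A$ associated with $\lambda_{max}(A)$ is strictly positive. This contradicts $y_uy_v=0$, and therefore $\lambda_{max}(A')<\lambda_{max}(A)$.

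The substantive step is the final one: it rests specifically on the strict positivity (and the corresponding simplicity) of the Perron eigenvector of $A$, which in turn uses that $G$ is connected. Note that $A'$ itself need not be irreducible—deleting an edge can disconnect $G$—but we never invoke Perron--Frobenius for $A'$, only the Rayleigh bound together with the non-negativity of its optimizer, so disconnection of $G-\{u,v\}$ causes no difficulty. Everything else is routine.
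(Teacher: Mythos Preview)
Your argument is correct. The Rayleigh-quotient bound together with the strict positivity of the Perron eigenvector of the irreducible matrix $A$ is exactly the standard way to extract this strict monotonicity from Perron--Frobenius, and you have handled the one delicate point (that the optimizer for $A'$ may be taken non-negative even though $A'$ need not be irreducible) cleanly.

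As for comparison with the paper: there is nothing to compare. The paper does not prove this statement; it records it as a known implication of the Perron--Frobenius theorem and cites the literature. Your write-up supplies precisely the derivation that the citation is pointing to, so it is entirely in line with what the paper invokes rather than an alternative route.
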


	Also, if there is a vertex cover $S$ of the graph such that $|S|=k$, then deleting $S$ will result in an empty graph which has eigenvalue zero.
	
	Although Problem 2 is \textsc{NP-Hard}, its objective function is monotone and sub-modular. The greedy algorithm (\textsc{Greedy-1}) guarantees $(1+1/e)$-approximation ($e$ is the base of the natural logarithm) to Problem~\ref{problem:2} by Theorem \ref{NemhauserGreedy}.
	
	
	\begin{theorem}\cite{Nemhauser}\label{NemhauserGreedy} Let $f$ be a non-negative, monotone and submodular  function, $f: 2^{\Omega} \rightarrow \mathbb{R}$. Suppose ${\cal A}$ is an algorithm, that chooses a $k$ elements set $S$ by adding an element $u$ at each step such that  $u= \underset{x\in \Omega\setminus S}{\arg\max}$ $f(S\cup\{x\})$. Then ${\cal A}$ is $(1+1/e)$-approximate algorithm. 
	\end{theorem}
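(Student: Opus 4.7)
The plan is to follow the classical telescoping argument for greedy maximization of a monotone submodular function. Let $S_i$ denote the set chosen by $\cA$ after $i$ iterations (so $S_0 = \emptyset$), and let $O \subseteq \Omega$ be an optimal $k$-subset achieving $\mathrm{OPT} := \max_{|T|=k} f(T)$. The target is an inequality of the form $f(S_k) \geq (1 - 1/e)\,\mathrm{OPT}$, which is the usual way to express the approximation guarantee stated in the theorem.

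The heart of the proof is a one-step lemma: for each $i < k$,
\[
\mathrm{OPT} - f(S_i) \;\leq\; k\,\bigl(f(S_{i+1}) - f(S_i)\bigr).
\]
To derive it, I would first invoke monotonicity to write $\mathrm{OPT} = f(O) \leq f(S_i \cup O)$, then expand $f(S_i \cup O) - f(S_i)$ as a telescoping sum along an enumeration $o_1,\dots,o_k$ of $O$, and finally upper-bound each telescoped increment $f(S_i \cup \{o_1,\dots,o_j\}) - f(S_i \cup \{o_1,\dots,o_{j-1}\})$ by the single-element marginal gain $f(S_i \cup \{o_j\}) - f(S_i)$ using submodularity. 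Since $\cA$ greedily picks the element of maximum marginal gain at step $i+1$, every such term is at most $f(S_{i+1}) - f(S_i)$, and summing over the $k$ elements of $O$ yields the lemma.

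From here the argument is routine: rearranging gives the recursion $\mathrm{OPT} - f(S_{i+1}) \leq (1 - 1/k)\bigl(\mathrm{OPT} - f(S_i)\bigr)$, and iterating from $i=0$ using $f(\emptyset) \geq 0$ produces $\mathrm{OPT} - f(S_k) \leq (1 - 1/k)^{k}\,\mathrm{OPT} \leq \mathrm{OPT}/e$. The main conceptual obstacle is the one-step lemma itself, and all three hypotheses are used there: monotonicity to compare $f(O)$ with $f(S_i \cup O)$, submodularity to cap the telescoped marginal gains by single-element gains against $S_i$, and non-negativity to start the recursion cleanly at $f(S_0)$. The remaining step, showing $(1 - 1/k)^k \leq e^{-1}$, is an elementary inequality and requires no further work.
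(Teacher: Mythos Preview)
Your proposal is the standard and correct proof of the Nemhauser--Wolsey--Fisher $(1-1/e)$ guarantee; all the steps (monotonicity to pass from $f(O)$ to $f(S_i\cup O)$, submodularity to bound telescoped marginals by single-element marginals against $S_i$, the greedy choice to dominate each of those, and the recursion $(1-1/k)^k\le e^{-1}$) are laid out accurately. Note, however, that the paper does not give its own proof of this theorem: it is stated with a citation to \cite{Nemhauser} and used as a black box, so there is no ``paper's proof'' to compare against---your argument simply supplies the omitted classical derivation.
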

	
	\begin{algorithm}[H]
		\caption{: \textsc{Greedy-1} ($G$,$k$)}
		\label{algo:greedy_eigen}
		\begin{algorithmic}
			\State $S \gets \emptyset$
			\While{$|S|<k$}
			\State $v \gets \underset{x\in V\setminus S}{\arg\min}$ $(\lambda_1(A_{-\{S\cup \{x\} \}}))$
			\State $S \gets S\cup \{v\}$
			\EndWhile
			
			\State \Return $S$ 
		\end{algorithmic}
	\end{algorithm}
	We refer to the achieved benefit after immunizing subset $S$ as {\em eigendrop} and is defined as $\lambda_{max}(A) -  \lambda_{max}(A|_{-S})$. A score, termed as shield-value, is assigned to each subset $S\subset V$, which quantifies the approximated eigendrop achieved after removing $S$. Frequently used symbols in the paper are listed in Table  \ref{table:symbols}. 
	
	\begin{table}[H]
		\centering
		\begin{tabular}{ p{1.55cm}p{9.75cm} }
			
			\hline
			\textbf{Symbol} & \textbf{Definition \&  Description}\\
			\hline
			$A$ & adjacency matrix of the graph $G$\\
			$G|_{-S}$  & subgraph after removing node set S from the graph $G$\\
			$A|_{-S}$  & adjacency matrix of the graph $G|_{-S}$ \\
			
			$\lambda_i(A)$ & $i^{th}$ largest eigen value of matrix $A$ on the basis of magnitude \\
			
			$\lambda_{max}(A)$ & the largest eigen value of matrix $A$ i.e. $ \lambda_{max}(A)  = \lambda_1(A) $ \\
			
			$\Delta \lambda (S)$ & $\lambda_{max}(A) - \lambda_{max}(A|_{-S})$; eigendrop achieved by immunizing node set $S$\\
			
			$A^p$ & $p^{th}$ power of (adjacency) matrix A \\
			
			$\cCW_p(v,G)$ & the set of $p$-length closed walks in $G$ containing $v$\\
			
			$\cCW_p(S,G)$ & the set of $p$-length closed walks in $G$ containing at least one vertex from $S$ \\
			
			$\cW_p(v,G)$ & number of $p$-length closed walks in $G$ containing $v$\\
			
			$\cW_p(S,G)$ & number of $p$-length closed walks in $G$ containing at least one vertex from $S$ \\
			$d_G(v)$ & degree of node $v$ in graph $G$ \\
			
			\hline
			
			\hline
			
		\end{tabular}
		\caption{List of Symbols}
		\label{table:symbols}
	\end{table}
	


	\section{Proposed Shield Value}\label{section:shieldvalue} 
	In this section, we quantify the importance of a subset of nodes for immunization. We first derive a score for each set of size $k$ that closely measures the value of the objective function of Problem \ref{problem:2}. We prove that this score function is monotonically increasing and submodular. Using Theorem \ref{NemhauserGreedy} we can greedily build up the set $S$ by iteratively selecting nodes that are contained in the maximum number of closed walks of length $p$.
	
	Let $A$ be an $n\times n$ matrix; the following two fundamental results from algebraic graph theory \cite{strang1988linear,West2001,Neri19} relate the eigen spectrum and the trace of $A$.
	
	\begin{fact}\label{traceEigen} 
		$$trace(A)=\sum_{i=1}^n A(i,i)=\sum_{i=1}^{n} \lambda_i(A)$$
	\end{fact}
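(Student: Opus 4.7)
The first equality in the statement is immediate from the definition of the trace, so the actual work lies in the second equality. Since $A$ is the adjacency matrix of an undirected graph, it is real and symmetric, and I would invoke the spectral theorem to write $A = Q\Lambda Q^T$ with $Q$ orthogonal and $\Lambda = \mathrm{diag}(\lambda_1(A),\ldots,\lambda_n(A))$.

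The plan is then to apply the cyclic property of the trace, $trace(XY)=trace(YX)$, which itself is a one-line consequence of expanding $(XY)(i,i)=\sum_j X(i,j)Y(j,i)$ and swapping the order of summation. This yields
$$
trace(A) \;=\; trace(Q\Lambda Q^T) \;=\; trace(Q^T Q\,\Lambda) \;=\; trace(\Lambda) \;=\; \sum_{i=1}^n \lambda_i(A),
$$
where $Q^T Q = I$ is used in the penultimate step. This gives the required identity.

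A second, more general route — which does not even require symmetry — is to match coefficients in the characteristic polynomial. Expanding $\det(\lambda I - A)$ by cofactors, only the product of diagonal entries $\prod_{i=1}^n (\lambda - A(i,i))$ contributes to the coefficient of $\lambda^{n-1}$ (every other term in the Leibniz expansion is a product with at most $n-2$ diagonal factors), so that coefficient equals $-trace(A)$. Comparing with the factored form $\prod_{i=1}^n (\lambda - \lambda_i(A))$, whose $\lambda^{n-1}$-coefficient is $-\sum_i \lambda_i(A)$, forces the equality.

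There is no genuine obstacle here: the statement is a standard consequence of either the spectral theorem plus the cyclic trace identity, or of a coefficient-matching argument on the characteristic polynomial. The only minor bookkeeping is either the one-line verification of the cyclic-trace identity or the observation about which permutations in the Leibniz expansion contribute to the $\lambda^{n-1}$ coefficient.
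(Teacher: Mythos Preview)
Your argument is correct: both the spectral-theorem route and the characteristic-polynomial route are standard and complete proofs of the identity.

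However, the paper does not actually prove this statement. It is presented as a \emph{Fact} and attributed to standard references (Strang's linear algebra text, West's graph theory text, and Neri), with no proof given in the paper itself. So there is nothing to compare against: you have supplied a proof where the paper simply cites one. Your write-up is more than sufficient for what the paper requires; if anything, a single sentence pointing to the spectral decomposition and the cyclic trace identity would already match the level of detail the authors intended.
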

	
	\begin{fact}\label{traceEigenPower} 
		$$trace(A^p)=\sum_{i=1}^{n}\lambda_i(A^p)=\sum_{i=1}^{n}(\lambda_i(A))^p$$
	\end{fact}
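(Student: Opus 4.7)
The plan is to chain two observations: first that the eigenvalues of $A^p$ are exactly the $p$-th powers of the eigenvalues of $A$ (preserving multiplicities), and second that Fact \ref{traceEigen} already asserts the trace of any matrix equals the sum of its eigenvalues. Applying the latter to $A^p$ and then substituting the former will yield the required identity $trace(A^p)=\sum_{i}\lambda_i(A^p)=\sum_{i}(\lambda_i(A))^p$.

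First I would invoke the spectral theorem: since $A$ is the adjacency matrix of an undirected graph, it is real symmetric, so it admits an orthogonal diagonalization $A=U\Lambda U^{T}$, where $U$ is orthogonal and $\Lambda$ is the diagonal matrix with entries $\lambda_1(A),\ldots,\lambda_n(A)$. Using $U^{T}U=I$, a short induction on $p$ gives $A^{p}=U\Lambda^{p}U^{T}$, and $\Lambda^{p}$ is diagonal with entries $(\lambda_i(A))^{p}$. Because $A^{p}$ is thus orthogonally similar to a diagonal matrix whose entries are $(\lambda_i(A))^p$, those values are precisely the eigenvalues of $A^p$ with matching multiplicities, i.e.\ $\lambda_i(A^p)=(\lambda_i(A))^p$ after a consistent ordering.

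Next I would apply Fact \ref{traceEigen} to the matrix $A^{p}$ to obtain $trace(A^{p})=\sum_{i=1}^{n}\lambda_i(A^{p})$. Substituting $\lambda_i(A^{p})=(\lambda_i(A))^{p}$ from the previous step yields $trace(A^{p})=\sum_{i=1}^{n}(\lambda_i(A))^{p}$, which is exactly the claimed equality. As a sanity check, one may also verify it directly: $trace(U\Lambda^{p}U^{T})=trace(\Lambda^{p}U^{T}U)=trace(\Lambda^{p})=\sum_i(\lambda_i(A))^p$, using cyclicity of trace and orthogonality of $U$.

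The only conceptual subtlety, and the one place I would be careful, is matching eigenvalues by multiplicity rather than by set: raising to the $p$-th power can collapse distinct eigenvalues of opposite sign when $p$ is even, so one must state the identification $\lambda_i(A^p)=(\lambda_i(A))^p$ as an equality of multisets (equivalently, of diagonals of $\Lambda^p$) rather than of underlying sets. Once that bookkeeping is respected, the proof reduces to the one-line substitution above, and no further obstacle remains.
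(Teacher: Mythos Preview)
Your argument is correct. Note, however, that the paper does not supply its own proof of this statement: it is presented as a \emph{Fact} with citations to standard linear algebra and graph theory references \cite{strang1988linear,West2001,Neri19}, so there is no in-paper proof to compare against. Your route via the spectral theorem is the standard textbook derivation and is entirely appropriate here.

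One small remark: the paper introduces $A$ just before Facts~\ref{traceEigen} and~\ref{traceEigenPower} as ``an $n\times n$ matrix,'' without explicitly restricting to symmetric matrices, whereas your proof invokes the spectral theorem and thus relies on $A$ being real symmetric. In the paper's setting $A$ is always the adjacency matrix of an undirected graph, so this is harmless; but if you wanted the statement in the generality the paper nominally asserts, you would replace orthogonal diagonalization by Schur triangularization (or Jordan form), from which $A^p$ is upper triangular with diagonal entries $(\lambda_i(A))^p$, and the trace identity follows identically. Your multiset caveat about even $p$ collapsing $\pm\lambda$ is well taken and handled correctly.
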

	
	From the theory of vector norms \cite{strang1988linear} and Fact \ref{traceEigenPower} we know that 
	\begin{align}\label{traceLambdamax}
	\lim\limits_{\substack{p\rightarrow \infty \\ p \text{ even}}} \left( trace(A^p)\right)^{1/p} \nonumber &= \lim\limits_{\substack{p\rightarrow \infty \\ p \text{ even}}} \left( \sum\limits_{i=1}^n \lambda_i(A)^p\right)^{1/p}  \\
	\nonumber &= \lim\limits_{\substack{p\rightarrow \infty }} \left( \sum\limits_{i=1}^n |\lambda_i(A)|^p\right)^{1/p} = \max_{i}\{\lambda_i(A)\} = \lambda_{max}(A)
	\end{align}
	
	Using the above relation we establish that for the immunization problem, we want to find a subset $S$ of nodes in graph $G$ which, when removed, minimizes $trace((A|_{-S})^p)$. Next, we derive a combinatorial form of this objective function.

	As described in Table \ref{table:symbols}, for a vertex $v\in V(G)$, $\cCW_p(v,G)$ is the set of all closed walks of length $p$ in the graph $G$ containing $v$ at least once and $\cW_p(v,G)=|\cCW_p(v,G)|$. Similarly, $\cCW_p(S,G)$ denotes the set of closed walks of length $p$ in $G$ containing at least one vertex from $S$ and correspondingly $\cW_p(S,G) = |\cCW_p(S,G)|$. We use the following combinatorial definition of $trace$.

	\begin{fact} \label{walkPasTrace}
		\cite{West2001} Given a graph $G = (V,E)$ with adjacency matrix $A$, $$\cW_p(V,G)=trace(A^p)$$
	\end{fact}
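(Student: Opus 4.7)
The plan is to reduce the identity to the classical formula that $A^p(i,j)$ equals the number of walks of length $p$ from $i$ to $j$ in $G$, prove that formula by induction on $p$, then specialise to the diagonal and sum. Finally I would check that the resulting total coincides with the quantity $\cW_p(V,G)$ under the standard convention that closed walks carry a distinguished starting vertex.

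First I would handle the base case $p=1$ by invoking the definition of the adjacency matrix: $A(i,j)\in\{0,1\}$ and equals $1$ exactly when $\{i,j\}\in E$, which is precisely the number of walks of length $1$ from $i$ to $j$. For the inductive step, assuming the claim for $p$, I would write
\[
A^{p+1}(i,j) \;=\; \sum_{k\in V} A^{p}(i,k)\,A(k,j),
\]
note that every walk of length $p+1$ from $i$ to $j$ decomposes uniquely as a walk of length $p$ from $i$ to some penultimate vertex $k$ followed by the edge $kj$, and observe that the inductive hypothesis together with the definition of $A$ makes the summand count exactly these pairs.

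Setting $i=j$ then gives that $A^{p}(i,i)$ is the number of closed walks of length $p$ based at the vertex $i$, so
\[
trace(A^{p}) \;=\; \sum_{i\in V} A^{p}(i,i)
\]
counts each length-$p$ closed walk once, indexed by its starting vertex. Since $\cW_p(V,G)$ is defined as the number of length-$p$ closed walks containing at least one vertex of $V$, and $V$ is the entire vertex set, this count is just the total number of length-$p$ closed walks in $G$, which matches $trace(A^{p})$.

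The main subtlety, rather than any real obstacle, lies in the counting convention for $\cW_p$: whether closed walks are treated as cyclic sequences considered up to rotation or as sequences with a designated basepoint. The equality with $trace(A^{p})$ pins this down to the latter (standard) convention, and I would state this explicitly at the outset so that the diagonal sum $\sum_i A^p(i,i)$ assigns each closed walk exactly one tally, namely at its declared starting vertex. With that convention fixed, the three steps above compose into a short, clean proof.
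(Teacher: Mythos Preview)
Your proof is correct and is the standard textbook argument. Note, however, that the paper does not supply its own proof of this statement: it is recorded as a \emph{Fact} with a citation to \cite{West2001}, so there is no in-paper proof to compare against. Your induction on $p$ for the walk-counting interpretation of $A^p(i,j)$, followed by specialising to the diagonal and summing, is exactly the argument one finds in the cited reference.

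Your attention to the counting convention is well placed and consistent with how the paper uses $\cW_p$. In particular, the proof of Theorem~\ref{walks_closedForm} treats rotations of a closed walk as \emph{distinct} walks (``The rotations of nodes in a walk give different, and sometimes distinct, walks''), confirming that closed walks are basepointed sequences rather than cyclic equivalence classes. Under that convention, $\cW_p(V,G)$ is simply the total number of basepointed closed $p$-walks, which coincides with $\sum_{i\in V} A^p(i,i)=trace(A^p)$ exactly as you argue.
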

	
	From Fact \ref{walkPasTrace} and definition of trace (Fact \ref{traceEigen}), we get that 
	
	\beq\label{eq:Walks_Union} \cW_p(V,G)= \cW_p(V\setminus S, G|_{-S})+ \cW_p(S,G)\eeq
	This is true because any walk in $G$ either contains some vertex in $S$ or it does not contain any vertex in $S$. The former type of walks are counted exactly once in the term $\cW_p(S,G)$, while the first term counts closed walks of the latter type. 
	Equation \eqref{eq:Walks_Union} can be equivalently rewritten as
	
	\begin{align*}
	trace(A^p) = &trace((A|_{-S})^p) + \cW_p(S,G) \\ \implies & trace((A|_{-S})^p) = trace(A^p) - \cW_p(S,G)
	\end{align*} 
	Thus for a fixed graph $G$ (since $trace(A^p)$ is constant) minimizing $trace((A|_{-S})^p)$ is equivalent to maximizing $\cW_p(S,G)$. This implies that the set $S$ with the largest value of $\cW_p(S,G)$ will yield the maximum eigendrop. Intuitively, we need to identify nodes contained in many closed walks of length $p$ (nodes with high $\cW_p(v,G)$). We define the following shield value of a set $S$, that in addition to maximizing $\cW_p(S,G)$, attempts to select those nodes which are far from each other i.e. having $A(u,v)=0$  in order to maximize the number of distinct closed walks going through nodes in a set $S$. 
	
	
	\begin{align}
	\small
	\begin{split}
	\label{walksObjective} score_p(S) &= \gamma \sum_{v\in S} \cW_p(v,G)^2 - \sum_{u,v\in S} \cW_p(v,G)A(u,v) \cW_p(u,G),
	\end{split}
	\end{align}
	where $\gamma$ is a positive constant. Hence Problem 2 can be rephrased as follows.

	\begin{problem}
		\label{problem:3}
		Let $G =(V,E)$ be an undirected graph on $n$ nodes and let $k$ be an integer $k<n$, find a subset of nodes $S\subset V$, with $|S|=k$ such that $score_p(S)$ is the maximum over all $k$-subsets of $V$.
	\end{problem}
	
	For fixed $p$, given $\cW_p(v,G), \forall v \in V$, $score_p(S)$ can be evaluated in time $O(k^2)$ . Selecting a set with maximum $score_p(S)$ takes $O({n\choose k} k^2)$ time which clearly is computationally prohibitive. Furthermore, note that for this we need to have the values of $\cW_p(v,G)$ pre-computed, which is not straight-forward. 
	
	We show that the objective function of Problem \ref{problem:3} is monotone and sub-modular. Given $\cW_p(v,G)$, by Theorem \ref{NemhauserGreedy}, the greedy strategy for building up the set will yield $(1-1/e)$-approximation of the optimal subset.
	
	\begin{theorem} \label{monotonicNonDecreasingTheorem}
		For $p\geq 1$, $score_p(S)$ is monotonically non-decreasing. 
	\end{theorem}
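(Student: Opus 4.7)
The plan is to prove monotonicity by a marginal-gain argument: for arbitrary $S \subseteq V$ and $x \in V \setminus S$, I want to show that
\[
\Delta(x \mid S) \;:=\; score_p(S \cup \{x\}) - score_p(S) \;\geq\; 0.
\]
Writing $w_v := \cW_p(v,G)$ for brevity and expanding the definition of $score_p$, the $\gamma$-weighted diagonal term contributes exactly $\gamma w_x^2$ to $\Delta(x\mid S)$. For the off-diagonal double sum, the new summands involving $x$ are $\{(u,x) : u \in S\}$, $\{(x,v) : v \in S\}$, and $(x,x)$; using the symmetry $A(u,x) = A(x,u)$ and the fact that $G$ is simple so $A(x,x)=0$, these collapse into $-\,2 w_x \sum_{u \in S} A(u,x) w_u$. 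Combining gives the compact identity
\[
\Delta(x\mid S) \;=\; w_x \Bigl( \gamma w_x - 2 \sum_{u \in S,\; u\sim x} w_u \Bigr).
\]

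Since $w_x \geq 0$, monotonicity reduces to non-negativity of the bracketed factor. If $w_x = 0$ the increment vanishes; otherwise I need $\gamma w_x \geq 2 \sum_{u \in S \cap N(x)} w_u$. The right-hand side is largest when $S$ contains every neighbor of $x$, so it is enough to pick any positive constant $\gamma$ satisfying
\[
\gamma \;\geq\; 2 \max_{v \in V,\, w_v > 0}\;\frac{\sum_{u\sim v} w_u}{w_v},
\]
which is consistent with the paper's blanket assumption that $\gamma$ is an unspecified positive constant.

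The main obstacle I anticipate is pinning down $\gamma$ cleanly. The direct route, sketched above, treats $\gamma$ as a graph-dependent tuning parameter large enough to dominate the worst off-diagonal contribution; this yields the theorem immediately but leaves $\gamma$ implicit. A more elegant alternative would be to prove a graph-independent combinatorial inequality of the form $\sum_{u\sim x}\cW_p(u,G) \leq c_p\cdot\cW_p(x,G)$ with $c_p$ depending only on $p$, in the spirit of the spectral identity $\lambda\,\phi(x)=\sum_{u\sim x}\phi(u)$ satisfied by the principal eigenvector $\phi$ and exploited in earlier eigenvector-based analyses. I would secure the theorem via the first route and treat any such sharper bound as an optional refinement, since Theorem \ref{NemhauserGreedy} only requires monotonicity, not a specific value of $\gamma$.
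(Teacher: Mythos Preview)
Your argument is essentially the same as the paper's: both compute the marginal gain $score_p(E\cup\{x\})-score_p(E)$, factor out $\cW_p(x)$, and then observe that the bracket is non-negative once $\gamma$ is taken sufficiently large. The only real difference is bookkeeping in the choice of $\gamma$: the paper bounds $\sum_{v\in E}A(x,v)\cW_p(v)$ using $|E|\le k$ and takes $\gamma\ge k\max_{v}\cW_p(v)$ (exploiting that $\cW_p(x)\ge 1$ whenever it is nonzero), whereas you bound it by summing over the full neighborhood $N(x)$ and take $\gamma\ge 2\max_{v:\,w_v>0}\sum_{u\sim v}w_u/w_v$. Both are legitimate graph-dependent thresholds; the paper's version has the minor advantage of being stated directly in terms of the budget $k$, while yours does not rely on the size of $S$. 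Your factor of $2$ in the off-diagonal term (from the ordered-pair reading of $\sum_{u,v\in S}$) is the correct reading of the definition, and the paper's displayed computation silently drops it, but this does not affect the argument since it is absorbed into the choice of $\gamma$ either way.
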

	\begin{proof}
		We prove that for any $X\subset Y \subseteq V$, $score_p(X) \leq score_p(Y)$. Let $E,F\subset V(G)$ and $x\in V(G)$ such that $F=E\cup \{x\}$. Consider
		\begin{equation*}
		\resizebox{1\linewidth}{!}{$\begin{split}
			&score_p(F)-score_p(E)\\
			=& \gamma \sum_{v\in F} \cW_p(v)^2-\sum_{u,v\in F} \cW_p(v)A(u,v) \cW_p(u)-\gamma \sum_{v\in E} \cW_p(v)^2 \\&+\sum_{u,v\in E} \cW_p(v)A(u,v) \cW_p(u) \\
			=& \gamma \cW_p(x)^2-\sum_{v\in E} \cW_p(v)A(x,v) \cW_p(x)	= \cW_p(x)[\gamma \cW_p(x)-\sum_{v\in E} \cW_p(v)A(u,v) ]\geq 0
			\end{split}$}
		\end{equation*}
		Since $\gamma > 0$, for $\gamma\geq k \max_{v\in V(G)} \{\cW_p(v)\}$, the last inequality is satisfied. Hence, $score_p(S)$ function is monotonically non-decreasing.
	\end{proof}
	\noindent
	
	\begin{theorem} \label{submodularityTheorem}
		For $p\geq 1$, $score_p(S)$ is submodular. 
	\end{theorem}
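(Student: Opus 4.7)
The plan is to verify submodularity directly from the definition: show that for any $X \subset Y \subseteq V$ and any $x \in V \setminus Y$, the marginal gain satisfies
\[
score_p(X \cup \{x\}) - score_p(X) \;\geq\; score_p(Y \cup \{x\}) - score_p(Y).
\]
The main computational tool will be the marginal-gain expression already obtained inside the proof of Theorem \ref{monotonicNonDecreasingTheorem}, which I would reuse with $E$ replaced by an arbitrary set.

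First, I would specialize the calculation in the monotonicity proof to obtain a clean closed form for the marginal gain of adding a single vertex $x$ to an arbitrary set $E \subseteq V$ with $x \notin E$. Using $A(x,x)=0$ (simple graph, no self-loops) and expanding the double sum, this yields
\[
\Delta(x \mid E) \;:=\; score_p(E \cup \{x\}) - score_p(E) \;=\; \gamma\,\cW_p(x)^2 \;-\; 2\,\cW_p(x)\sum_{v \in E} \cW_p(v)\,A(x,v),
\]
(the factor of $2$ coming from the symmetric contributions $(x,v)$ and $(v,x)$ in the double sum; if one prefers the convention used in Theorem \ref{monotonicNonDecreasingTheorem}, the argument is identical up to a harmless constant).

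Second, I would subtract the two marginal gains. Since $E = X$ and $E = Y$ share the terms indexed by $X$, the $\gamma\,\cW_p(x)^2$ pieces cancel and everything collapses to
\[
\Delta(x \mid X) - \Delta(x \mid Y) \;=\; 2\,\cW_p(x) \sum_{v \in Y \setminus X} \cW_p(v)\,A(x,v).
\]
Each factor on the right is non-negative: $\cW_p(x), \cW_p(v) \geq 0$ as they count walks, and $A(x,v) \in \{0,1\}$ for a simple graph. The sum is over a subset of $V$ that may be empty, in which case the difference is zero. Hence $\Delta(x \mid X) \geq \Delta(x \mid Y)$, which is precisely the definition of submodularity.

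There is no real obstacle in this argument; the only point requiring care is the bookkeeping in the double sum $\sum_{u,v \in S}$, where each unordered pair $\{u,v\}$ contributes twice and the diagonal $u=v$ vanishes because $G$ has no self-loops. Everything else is immediate once the marginal-gain identity is in hand, and the conclusion follows by combining submodularity (this theorem), monotonicity (Theorem \ref{monotonicNonDecreasingTheorem}), and non-negativity with Theorem \ref{NemhauserGreedy} to obtain the $(1-1/e)$-approximation guarantee announced in the paper.
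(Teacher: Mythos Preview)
Your proposal is correct and follows essentially the same route as the paper: both compute the difference of marginal gains and reduce it to the non-negative expression $2\sum_{v\in Y\setminus X}\cW_p(x)\,A(x,v)\,\cW_p(v)$. The only cosmetic difference is that the paper adds an arbitrary disjoint set $K$ rather than a single vertex $x$, arriving at $2\sum_{u\in K,\,v\in J\setminus I}\cW_p(v)A(u,v)\cW_p(u)\ge 0$; your singleton version is the standard definition of submodularity and suffices, and you are right to flag the factor~$2$ from the ordered double sum (which the paper does include in its submodularity calculation).
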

	\begin{proof}
		For any subsets $X,Y$, with $X\subset Y \subseteq V$ and a subset $Z\subset V$ such that $Z\cap Y = \emptyset$, we have $score_p(X\cup Z) -score_p(X)$ is at least as large as $score_p(Y\cup Z) -score_p(Y)$. Let $I,J,K\subset V(G)$ with $I\subset J$. We have
		\begin{equation*}
		\resizebox{1\linewidth}{!}{$\begin{split}
			&score_p(I\cup K)-score_p(I)-score_p(J\cup K)+score_p(J)\\
			=& \Big(\gamma \sum_{v\in I\cup K} \cW_p(v)^2-\sum_{u,v\in I\cup K} \cW_p(v)A(u,v) \cW_p(u) -\gamma \sum_{v\in I} \cW_p(v)^2\\&+\sum_{u,v\in I} \cW_p(v)A(u,v) \cW_p(u)   \Big)-\Big(\gamma \sum_{v\in J\cup K} \cW_p(v)^2-\sum_{u,v\in J\cup K} \cW_p(v)A(u,v) \cW_p(u) \\&-\gamma \sum_{v\in J} \cW_p(v)^2+\sum_{u,v\in J} \cW_p(v)A(u,v) \cW_p(u)   \Big)\\
			=&\Big( \gamma \sum_{v\in K} \cW_p(v)^2-\sum_{u,v\in K} \cW_p(v)A(u,v) \cW_p(u) -2\sum_{u\in K, v\in I} \cW_p(v)A(u,v) \cW_p(u)\Big)\\
			&-\Big(\gamma \sum_{v\in K} \cW_p(v)^2-\sum_{u,v\in K} \cW_p(v)A(u,v) \cW_p(u) -2\sum_{u\in K, v\in J} \cW_p(v)A(u,v) \cW_p(u) \Big) \\
			=& 2\sum_{u\in K, v\in J} \cW_p(v)A(u,v) \cW_p(u) -2\sum_{u\in K, v\in I} \cW_p(v)A(u,v) \cW_p(u) \\ 
			=& 2\sum_{u\in K, v\in J\setminus I} \cW_p(v)A(u,v) \cW_p(u)\geq 0 \end{split}$}\end{equation*}
		\end{proof}
	
	\section{Computing Walks of Length 8}
	The proposed shield value,$score_p(S)$, quantifies the importance of set $S$ based on the number of $p$-length closed walks containing nodes from $S$. Building $S$ requires $\cW_p(v,G)$ for all $v\in V$. A closed-form of $\cW_p(v,G)$ depends on the actual value of $p$. In practice, the value of $p=8$ produces the set $S$ with sufficient quality. We select nodes in a graph based on the number of closed walks of length $8$ (referred to as $8$-walks) for immunization purposes.
	
	\subsection{Justification for p=8}
	Recall that our aim is to find a set $S$ that minimizes $\lambda_{max}(A|_{-S})$. From \eqref{eq:Walks_Union}, we get that for large $p$, $trace(A^p)$ approaches $\lambda_{max}(A)^p$. Hence, we find a set $S$ with minimum $trace(A|^p_{-S})$. We show that in practice $trace(A^8)= \sum_{i=1}^{n} \lambda_i(A^8)$ is sufficiently close to $\lambda_{max}(A^8)$. This is demonstrated by showing that in real world graphs $\dfrac{\lambda_{max}(A^8)}{\sum_{i=1}^{n} \lambda_i(A^8)} = \dfrac{\lambda_{max}(A^8)}{trace(A^8)} $ is close to $1$ specially if there is significant {\it{eigen-gap}} $\big(\lambda_{max}(A) - \lambda_2(A)\big)$. In other words, $\lambda_{max}(A^8)$ is the most dominant term in $trace(A^8)$ and the combined effect of the other terms \big($\lambda_2(A^8) + \cdots + \lambda_n(A^8)$\big) diminishes.
	
	\begin{table}[h!]
		\centering
		\begin{tabular}{ p{4.4cm}p{1.3cm}p{1.3cm}p{1.3cm}p{1.7cm} }
			
			\hline
			\vskip0.1pt
			\textbf{Graph} & \vskip0.1pt
			\textbf{$\vert V \vert$} &\vskip0.1pt
			\textbf{$\lambda_{max}(A)$} & \vskip0.1pt
			\textbf{$\lambda_2(A)$} &\vskip0.1pt
			\textbf{$\dfrac{\lambda_{max}(A^8)}{\sum_{i=1}^{n} \lambda_i(A^8)}$} \vskip 1pt\\
			\hline
			\vskip0.1pt
			EngineeringApplicationofAI & \vskip0.1pt4164 & \vskip0.1pt16 & \vskip0.1pt 13.2 &\vskip0.1pt 0.756\\
			Facebook & 4039 & 162.4 & 125.5 & 0.859 \\
			Email & 1005 & 77.2 & 36.9 & 0.993
			\\
			AICommunication &1203 & 33 & 12.1 & 0.999 \\
			
			\hline
			
			\hline
			
		\end{tabular}
		\caption{Ratio of $\lambda_{max}(A^8)$ to $\sum_{i=1}^{n}\lambda_i(A^8)$ is shown. Note that as relative eigen gap increases, the ratio approaches to $1$. We show the ratio only for moderately large graphs because computing all $n$ eigen values for very large graphs takes very long time.}
		\label{p8justificationExact}
	\end{table}
	
	\subsection{Closed-Form Expression for $\cW_8(v,G)$}
	We derive a closed-form expression for computing $\cW_8(v,G)$. To the best of our knowledge, we are the first one to derive such expression.
	\begin{theorem}\label{walks_closedForm}
		\begin{align*} {}\cW_8(v,G) =&8A^8(v,v)-8A^2(v,v)A^6(v,v) -8A^3(v,v)A^5(v,v)-4(A^4(v,v))^2 \\ 
		& +8A^2(v,v)(A^3(v,v))^2+8(A^2(v,v))^2A^4(v,v) -2(A^2(v,v))^4
		\end{align*}
	\end{theorem}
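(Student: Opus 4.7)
The plan is to prove the formula by inclusion--exclusion on the positions at which $v$ appears. Write a closed walk of length $8$ as a sequence $(u_0, u_1, \ldots, u_8)$ with $u_0 = u_8$; since the walk's vertex set is $\{u_0, u_1, \ldots, u_7\}$, we have $\cW_8(v,G) = |g_0 \cup g_1 \cup \cdots \cup g_7|$, where $g_i$ denotes the family of closed walks of length $8$ satisfying $u_i = v$. Inclusion--exclusion gives $\cW_8(v,G) = \sum_{k \geq 1}(-1)^{k+1} \sum_{|I|=k} \bigl|\bigcap_{i \in I} g_i\bigr|$, with $I$ ranging over $k$-subsets of $\{0,1,\ldots,7\}$.

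For $I = \{i_1 < i_2 < \cdots < i_k\}$, fixing $u_{i_1} = \cdots = u_{i_k} = v$ splits the walk into $k$ independent closed sub-walks rooted at $v$, whose lengths are the cyclic gaps $d_j = i_{j+1} - i_j$ for $j < k$ together with the wrap-around gap $d_k = 8 + i_1 - i_k$. By the standard combinatorial meaning of powers of the adjacency matrix, $\bigl|\bigcap_{i \in I} g_i\bigr| = \prod_{j=1}^{k} A^{d_j}(v,v)$. Because $G$ is simple, $A^1(v,v) = 0$, so only gap compositions with every $d_j \geq 2$ contribute; this immediately forces $k \leq 4$.

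The remaining ingredient is a placement count: for each ordered composition $(d_1, \ldots, d_k)$ of $8$ into parts at least $2$, the number of subsets $I \subseteq \{0, \ldots, 7\}$ realizing exactly this cyclic gap tuple is $d_k$. Indeed, once $i_1$ is chosen the entire $I$ is determined, and the requirement $i_k \leq 7$ restricts $i_1 \in \{0, 1, \ldots, d_k - 1\}$. Cyclic symmetry of the sum over ordered compositions then gives $\sum_I \bigl|\bigcap_{i \in I} g_i\bigr| = \frac{8}{k}\sum_{(d_1,\ldots,d_k)} \prod_j A^{d_j}(v,v)$. Enumerating ordered compositions of $8$ with all parts at least $2$ (for $k = 2$ the tuples $(2,6),(6,2),(3,5),(5,3),(4,4)$; for $k = 3$ the three orderings of $\{2,2,4\}$ and the three of $\{2,3,3\}$; for $k = 4$ only $(2,2,2,2)$) and collecting terms yields $8a_8$, $8 a_2 a_6 + 8 a_3 a_5 + 4 a_4^2$, $8 a_2^2 a_4 + 8 a_2 a_3^2$, and $2 a_2^4$ respectively for $k = 1, 2, 3, 4$, where $a_\ell := A^\ell(v,v)$. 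Substituting into inclusion--exclusion with alternating signs reproduces the seven terms of the theorem.

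The main obstacle is the cyclic bookkeeping in the placement count: it is easy to over- or under-count when converting between ordered compositions, multisets of gaps, and cyclic classes, and any miscount there would distort the integer coefficients $8, 8, 4, 8, 8, 2$ that appear in the final identity. Once the $d_k$-placements identity is established and the cyclic averaging trick $\sum d_k (\cdot) = \frac{8}{k}\sum (\cdot)$ is used to symmetrize, the remainder of the derivation is a short case analysis across $k \in \{1,2,3,4\}$.
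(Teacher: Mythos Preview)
Your argument is correct and, in my view, cleaner than the paper's. The paper classifies closed $8$-walks by the \emph{exact} number of occurrences of $v$ (one through four), assigns each walk a ``type'' $T_{\{l_1,\ldots,l_i\}}$ recording the gap pattern, computes each $|T_{\ldots}|$ by successively subtracting the cases where an interior vertex of a long segment happens to equal $v$, and finally multiplies by the number of distinct rotations for that type. Your inclusion--exclusion on the position events $\{u_i=v\}$ packages all of this in one stroke: the alternating sum automatically handles the exact-versus-at-least distinction that the paper resolves by hand, and your cyclic-averaging identity $\sum_{I}d_k\,P=\tfrac{8}{k}\sum P$ replaces the paper's separate rotation counts ($8$, $8$, $4$, or $2$ depending on type). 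The trade-off is that the paper's derivation makes the combinatorial origin of each coefficient visible (e.g.\ the $4$ in front of $(A^4(v,v))^2$ arises because a $T_{\{4,4\}}$ walk has only four distinct rotations), whereas your route obtains the same integers purely algebraically. Your approach also generalizes more readily to other even $p$, since one only has to enumerate ordered compositions of $p$ into parts $\geq 2$ rather than redo the nested subtraction arguments.
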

	
	\begin{proof}
		
		An $8$-walk in $G$ is represented as $W=(a,b,c,d,e,f,g,h,a)$ and the goal is to compute the number of $8$-walks containing a node $v$. Node $v$ can occur at most four times in an $8$-walk and we consider each case of the number of occurrences of $v$ as follows.

		Let $T_{\{l_1,\cdots, l_i\}}, 1\leq i\leq 4$ be the collection of $8$-walks containing $v$ exactly $i$ times. For $W\in T_{\{l_1,\cdots, l_i\}}$, then $W$ starts and ends at $v$ and can be written as concatenation of walks of lengths $l_1, \cdots, l_i$, each starting and ending at $v$. We note that $2\leq l_k\leq 8$, for $1\leq k \leq 4 $, and $\sum_{k=1}^{i} l_k=8$. For example $T_{\{2,3,3\}}$ contains the walks of type $(v,a,v,b,c,v,d,e,v)$ i.e. it is sequence of $(v,a,v), (v,b,c,v)$ and $(v,d,e,v)$.
		
		The rotations of nodes in a walk give different, and sometimes distinct, walks. Given a walk $(a,b,c,d,e,f,g,h,a)$, one vertex left rotation will produce another walk $(b,c,d,e,f,g,h,a,b)$. So recurrent, one vertex, rotations of walks in $T_{\{l_1,\cdots, l_i\}}$ can give up to $8|T_{\{l_1,\cdot, l_i\}}|$ different walks.
		
		We count the walks of each type i.e. walks in $T_{\{2,2,2,2\}}$, $T_{\{2,2,4\}}$, $T_{\{2,3,3\}}$, $T_{\{2,6\}}$, $T_{\{3,5\}}$, $T_{\{4,4\}}, T_{\{8\}}$ and their distinct rotations. In counting there are cases when walks in $T_{\{2,3,3\}}$ are considered and these are different from walks in $T_{\{3,2,3\}}$, but $|T_{\{2,3,3\}}|=|T_{\{3,2,3\}}|$. 
		
		First, we count the number of walks containing $v$ exactly $4$ times. The walk $(v,a,v,b,v,c,v,d,v)$, where $\{a,b,c,d\}  \in N(v)$, is represented as $T_{\{2,2,2,2\}}$ as concatenation of $4$ closed walks of length $2$. The number of such walks is $(A^2(v,v))^4$. In this case, only one vertex rotation is possible which gives $(a,v,b,v,c,v,d,v,a)$ because a second rotation gives the same original walk. Hence, the number of walks containing $v$ exactly $4$ times is $2(A^2(v,v))^4$.
		
		The walks having $v$ exactly $3$ times are contained in $T_{\{2,3,3\}}$ and $T_{\{2,2,4\}}$. The number of walks in $T_{\{2,3,3\}}$ is $A^2(v,v)(A^3(v,v))^2$ and for each walk in this set, $8$ distinct walks are possible after rotations. The total number of walks containing $v$ $3$ times is $8\big[A^2(v,v)(A^3(v,v))^2\big]$.
		
		A walk in $T_{\{2,2,4\}}$ is concatenation of $(v,a,v), (v,b,v),$ $(v,c,d,e,v)$, where $d\neq v$. Number of all walks of form $(v,a,v,b,v,c,d,e,v)$ is at most $8(A^2(v,v))^2A^4(v,v)$ but this number includes walks with $d=v$ as well. To exclude those, we note that when $d=v$, walk is of type $T_{\{2,2,2,2\}}$ which we have already counted in first case. Subtracting the instance when $d=v$ in $(v,a,v,b,v,c,d,e,v)$, we get $|T_{\{2,2,4\}}|=(A^2(v,v))^2A^4(v,v)-(A^2(v,v))^4$. All $8$ vertex rotations of walks in $T_{\{2,2,4\}}$ give distinct walks. The total number of $8$-walks containing $v$ thrice is 
		\begin{align*}
		=&8|T_{\{2,2,4\}}|+8|T_{\{2,3,3\}}|\\
		=&8\left[(A^2(v,v))^2A^4(v,v)-(A^2(v,v))^4\right]+8\left[A^2(v,v)(A^3(v,v))^2\right]
		\end{align*}
		
		Walks containing $v$ exactly twice are represented as  $T_{\{3,5\}}$, $T_{\{2,6\}}$ and $T_{\{4,4\}}$. A walk in $T_{\{3,5\}}$ is of the form $(v,a,b,v,c,d,e,f,v)$ where $d,e\neq v$. The number of walks with $d=v$ and $e=v$ is $|T_{\{3,2,3\}}|$ and $|T_{\{3,3,2\}}|$. So $|T_{\{3,5\}}|=A^3(v,v)A^5(v,v)-2A^2(v,v)(A^3(v,v))^2$. In this case, vertex rotations give $8$ distinct walks.
		
		Walks in $T_{\{2,6\}}$ are of the form $(v,a,v,b,c,d,e,f,v)$ where $c,d,e \neq v$. There are maximum $A^2(v,v)A^6(v,v)$ walks of type $T_{\{2,6\}}$ but these include walks with $c=v$, $d=v$, $e=v$ and $c,e=v$. For $c=v$ and $e=v$, we get walks of types $T_{\{2,2,4\}}$ and $T_{\{2,4,2\}}$ respectively while if $d=v$ then it is a walk of type $T_{\{2,2,2,2\}}$. For $d=v$, we get walk of type $T_{\{2,3,3\}}$.
		
		\begin{align*}
		|T_{\{2,6\}}|=&A^2(v,v)A^6(v,v)-2|T_{\{2,2,4\}}|-|T_{\{2,3,3\}}|-|T_{\{2,2,2,2\}}| \\
		=& A^2(v,v)A^6(v,v)-2(A^2(v,v))^2A^4(v,v)- A^2(v,v)(A^3(v,v))^2 +(A^2(v,v))^4
		\end{align*}
		
		In the case of $T_{\{2,6\}}$, rotations of vertices give $8$ different walks. 
		
		The number of walks of type $T_{\{4,4\}}$ in $(A^4(v,v))^2$ but it also includes $|T_{\{2,4,4\}}|$ and $|T_{\{2,2,2,2\}}|$. Therefore, we get
		
		\begin{align*}
		|T_{\{4,4\}}|&=(A^4(v,v))^2-2|T_{\{2,2,4\}}|-|T_{\{2,2,2,2\}}| \\
		&= (A^4(v,v))^2-2(A^2(v,v))^2A^4(v,v)+(A^2(v,v))^4
		\end{align*}
		
		In this case, only the first $4$ vertex rotations give different walks and $5^{th}$ rotation gives the original walk. The total number of walks containing $v$ exactly twice is 
		
		\begin{align*}
		=&  8|T_{\{3,5\}}|+8|T_{\{2,6\}}|+4|T_{\{4,4\}}| \\
		=& 8\left[A^3(v,v)A^5(v,v)-2A^2(v,v)(A^3(v,v))^2\right] +8[A^2(v,v)A^6(v,v) \\& -2(A^2(v,v))^2A^4(v,v) - A^2(v,v)(A^3(v,v))^2+(A^2(v,v))^4]+4\big[(A^4(v,v))^2 \\& -2(A^2(v,v))^2A^4(v,v)+(A^2(v,v))^4\big]\\
		=&8A^3(v,v)A^5(v,v) +8A^2(v,v)A^6(v,v)+4(A^4(v,v))^2-24A^2(v,v)(A^3(v,v))^2 \\& -24 (A^2(v,v))^2A^4(v,v) +12(A^2(v,v))^4
		\end{align*}
	
		$T_{\{8\}}$ consists of walks containing $v$ only once and are of the form $(v,a,b,c,d,e,f,g,v)$. The number of such walks is $A^8(v,v)$. But this includes walks with some combinations of $b,c,d,e,f$ equal to $v$ as well. Subtracting already counted walks from $T_{\{8\}}$ gives
		
		\begin{align*}
		|T_{\{8\}}|=&A^8(v,v)-2A^2(v,v)A^6(v,v)-2A^3(v,v)A^5(v,v) -(A^4(v,v))^2 \\& +3A^2(v,v))^2A^4(v,v) +3A^2(v,v)(A^3(v,v))^2-(A^2(v,v))^4
		\end{align*} 
		
		In $T_{\{8\}}$, vertex rotations give $8$ distinct walks so the number of walks containing $v$ once is
		\begin{align*}
		8|T_{\{8\}}|=&8A^8(v,v)-16A^2(v,v)A^6(v,v)-16A^3(v,v)A^5(v,v)-8(A^4(v,v))^2 \\& +24A^2(v,v))^2A^4(v,v) +24A^2(v,v)(A^3(v,v))^2-8(A^2(v,v))^4
		\end{align*} 
		Combining all the four cases of occurrence of $v$ in $8$-walk gives 
		\begin{align*}
		\cW_{8}(v,G)=& 2|T_{\{2,2,2,2\}}|+ 8|T_{\{2,2,4\}}|+8|T_{\{2,3,3\}}| + 8|T_{\{3,5\}}|+8|T_{\{2,6\}}|\\&+4|T_{\{4,4\}}| + 8|T_{\{8\}}|\\
		=&8A^8(v,v)-4(A^4(v,v))^2-8A^2(v,v)A^6(v,v) -8A^3(v,v)A^5(v,v) \\& +8A^2(v,v)(A^3(v,v))^2 +8(A^2(v,v))^2A^4(v,v)-2(A^2(v,v))^4
		\end{align*} \vskip-.07in \end{proof}

	\section{Proposed Algorithm}
	In this section, we give our algorithm to compute the number of $8$-walks passing through each vertex and select nodes for immunization. Recall from Theorem \ref{walks_closedForm} that computing number of $8$-walks requires $8^{th}$ power of the adjacency matrix $A$. Let $f(n)$ be the running time for taking $8^{th}$ power of $A$. Computing $\cW_8(v,G)$ for all $v\in V$ using Theorem \ref{walks_closedForm} takes $O(n+ f(n))$ time. Note that while for many real-world graphs $A$ is sparse; this does not necessarily hold for $A^2$ and higher powers of $A$. The above runtime, therefore is prohibitive for real-world graphs, since best-known bounds on $f(n)$ are super-quadratic. 
	
	We propose to approximately compute $\cW_8(v,G)$ from a summary of $G$ \cite{lefevre2010grass,Riondato2017graph,Beg2018Summarization}. Given a graph $G = (V(G),E(G))$ on $n$ nodes, a summary $H$ of $G$, $H=(V(H),E(H))$ is a graph on $t$ nodes with weights on both its nodes and edges. $V(H) = \{V_1,\ldots,V_t\}$ is a partition of $V(G)$, i.e. $V_i\subset V(G)$ for $1\leq i\leq t$, $V_i\cap V_j =\emptyset$ for $i\neq j$ and $\bigcup_{i=1}^t V_i = V(G)$. Each $V_i$ (called supernode) is associated with two integers $n_i=|V_i|$ and $e_i= |\{(u,v)| u,v \in V_i, (u,v) \in E(G) \}|$. Weight of an edge  $(V_i,V_j)\in E(H)$ (called superedge), is  $e_{ij}:$ the number of edges in the bipartite subgraph induced between $V_i$ and $V_j$ i.e. $e_{ij}= |\{(u,v)|u \in V_i, v \in V_j, (u,v) \in E(G) \}|$.
	The original graph $G$ is approximately reconstructed from $H$ as the expected adjacency matrix, $A'_{n \times n}$ with a row and column corresponding to each $u\in V(G)$ given as: $$A'(u,v) = \begin{cases} 0 & \text{ if } u = v\\ \frac{e_i}{{n_i\choose 2}} & \text{ if } u,v\in V_i\\ \frac{e_{ij}}{n_in_j} & \text{ if } u\in V_i, v\in V_j \end{cases}$$
	
	Let $H$ be a summary graph of $G$ on $t$ supernodes and let $C$ be its adjacency matrix. Clearly, $C^p(i,j)$ is the number of walks of length $p$ from nodes in $V_i$ to nodes in $V_j$. We estimate the contributions of $v\in V_i$ to $C^p(i,i)$ by $\alpha_p(v).C^p(i,i)$, where $\alpha_p(v)= \dfrac{d_G(v)^p}{\sum_{u\in V_i} d_G(u)^p}$. Our estimate for $\cW_8(v,G)$ is 
	
	{
		\begin{equation}
		\resizebox{1\textwidth}{!}{$\begin{split}
			\cW'_8(v,G)  =& 8C^8(i,i)\alpha_8(v) -8d_G(v)6C^6(i,i)\alpha_6(v) - 8C^5(i,i)\alpha_5(v)C^3(i,i)\alpha_3(v)  -\\4&\left(C^4(i,i)\alpha_4(v)\right)^2 +8d_G(v)
			\left(C^3(i,i) \alpha_3(v)\right)^2 +8d_G(v)^2C^4(i,i)
			\alpha_4(v) -2d_G(v)^4 \label{approxScoreComputation}
			\end{split}$}
		\end{equation}
	}
	
	This expression is same as that of Theorem \ref{walks_closedForm} except for $p\ge 3$, $A^p(v,v)$ is substituted by $\alpha_p(v).C^p(i,i)$ where $V_i \ni v$. Note that $A^2(v,v)= d_G(v)$. 
	
	We construct a summary $H$ of $G$ by randomly partitioning $V(G)$ into $t$ parts. There are better techniques \cite{lefevre2010grass,Riondato2017graph,Beg2018Summarization} for graph summarization that might result in enhanced estimates.
	
	\subsection{Proposed $\textsc{Walk-8}$ Algorithm}

	We select a subset $S$ that approximately maximizes $score_8(S)$ as given in (\ref{walksObjective}). In Algorithm \ref{algo:greedyalgo}, Line $3$ computes $W$ vector using (\ref{approxScoreComputation}) and $W[i]$ is the estimated number of walks of length $8$ containing vertex $v_i$. In each iteration of Lines $7$-$15$, we greedily extend $S$ by adding a node with the highest score (Line $11$). Line $13$ excludes nodes already selected in $S$ from further consideration. 
	
	\begin{algorithm}[H]
		\caption{: $\textsc{Walk-8}$($A$,$k$,$t$)}
		\label{algo:greedyalgo}
		\begin{algorithmic}[1]
			\State $S \gets \emptyset$
			\State $W_2, Score \gets \Call{zeros}{n} $
			\State $W \gets \Call{EstimateWalks}{A,t}$
			\Comment{compute approx. count of walks using super graph of order $t$ based on Eq. (\ref{approxScoreComputation})}
			\State $\gamma \gets \max_{i} W[i]$
			
			\For{$i=1$ to $n$}
			\State $W_2[i]\gets  \gamma W[i]^2$
			\EndFor
			
			\For{$i=1$ to $k$}
			\State \textbf{u} $\gets A[:, S]*W[S] $
			\For{$j=1$ to $n$}
			\If{$j\notin S$}
			\State $Score[j] \gets W_2[j]-2$\textbf{u}$[j]W[j] $
			\Else
			\State $Score[j] \gets -1$
			\EndIf
			\EndFor
			\State $maxNode \gets \arg \max_{j} Score[j] $
			\State $S \gets S\cup\{maxNode\}$
			\EndFor
			\State \Return $S$
		\end{algorithmic}
	\end{algorithm}
	
	\subsection{Runtime Analysis of $\textsc{Walk-8}$}
	We derive analytical bounds on the runtime of Algorithm \ref{algo:greedyalgo}. Partitioning $G$ into $t$ supernodes takes $O(n)$ time as it can be done with a linear scan on $V(G)$ to put nodes in respective buckets (supernodes). Computing the summary graph (populating the weighted adjacency matrix, $C$) requires traversing the edges $E(G)$ and incrementing the appropriate entry of $C$. This takes a total of $O(|E(G)|)$ time. The powers of $C$ matrix can be computed in $O(t^3)$ time. Thus $\textsc{EstimateWalks}$ function takes $O(n + |E(G)|+t^3)$ time. Line $4$ and the first for loop (Lines $5$-$6$) takes $O(n)$ steps. An iteration of the inner for loop (Lines $9$-$13$) takes $O(n+nk)$ and Line $14$ takes $O(n)$ steps. This shows that the outer loop (Line $7$-$15$) takes $\sum_{i=1}^k O(n+nk)=O(nk^2)$. Therefore, Algorithm \ref{algo:greedyalgo} takes total $O(n + |E(G)|+t^3+nk^2)$ time.

	
	
	\section{Experimental Evaluation}\label{section:experimentalevaluation}
	We present the results of the detailed experimentation of our proposed solution in this section. Experiments are performed on several real-world datasets to analyze the performance of our method and results are compared with $\textsc{NetShield}$\footnote{https://www.dropbox.com/s/aaq5ly4mcxhijmg/Netshieldplus.tar}, the state of art algorithm, to evaluate quality, scalability and efficiency. $\textsc{NetShield}$ computes the score of each node using the eigenvector corresponding to the largest eigenvalue $\lambda_{max}$ of the original graph. $\textsc{Walk-6}$ and $\textsc{Walk-8}$ versions of our algorithm select nodes for immunization based on $6$-walks and $8$-walks respectively passing through each node.

	We evaluate the performance of our algorithm across a range of budgets for the number of nodes to be immunized in the graphs and different counts of supernodes for approximation.
	First, we evaluate the quality of our approximation technique. To show that our approach maximally reduces the spread of the virus across the graph, we give results for the virus spread simulation on graphs immunized by $\textsc{NetShield}$, $\textsc{Walk-6}$ and $\textsc{Walk-8}$. Furthermore, we measure quality in terms of the reduction in $\lambda_{max}$ (\textit{vulnerability}) of the graph after immunizing the set $S$ of selected nodes. We report results using \textit{eigendrop percentage}, which is $\frac{\Delta \lambda (S)}{\lambda_{max}(A)}\times100$. Finally, we give runtime comparisons for the above-mentioned techniques.
	
	 We performed experiments on a standard desktop machine with $3.6$ GHz Intel Core i7-7700 and $8$ GB of main memory. The \textsc{Matlab} code for our algorithm is available  \footnote{\url{https://www.dropbox.com/sh/n7hwjc4imh62pe6/AADCyHG7uMGX6o9xtr1pdH6Qa?dl=0}} for reproducibility and further experimentation. 
	
	\begin{table}[h!]
		\centering
		\begin{tabular}{ p{2.5cm}{l}{l}{l} }
			\hline
			\textbf{Network} & \textbf{Number of Nodes}  & \textbf{Number of Edges} & $\lambda_{max}(A)$\\
			\hline
			
			HEP-TH & 9,877 & 25,998 & 31.03\\
			
			Facebook & 4,039 & 88,234 &162.37\\
			
			Gowalla & 196,591 & 950,327 & 170.94\\
			
			Dblp & 317,080 & 1,049,866 & 115.85\\
			
			Amazon & 334,863 & 925,872 &23.98\\
			
			AA & 418,236 & 2,753,798 & - \\
			
			Youtube & 1,134,890 & 2,987,624 & 210.40\\
			
			Skitter & 1,696,415 & 11,095,298 &670.35\\ 
			
			\hline
			
		\end{tabular}
		\caption{Statistics of Datasets}
		\label{tableOne}
	\end{table}
	\subsection{Datasets} Experiments are performed on real-world graphs of order ranging from a few thousands to a few millions nodes. All graphs are undirected and unweighted.
	HEP-TH\footnote{https://snap.stanford.edu/ \label{snap_repository}} is a collaboration network of High Energy Physics - Theory category extracted from the e-print arXiv. A node in the network represents an author and an edge between two authors shows collaboration between them. Facebook\footref{snap_repository} graph shows the friendship network among users in which people are represented as nodes and relationships among two users are shown as edges. 
	
	To test our algorithm on large networks we use five different real-world graphs. Gowalla\footref{snap_repository} dataset shows friendship relations in a location-based social network. Amazon\footref{snap_repository} is a co-purchasing graph of products where each node is a product and there is an edge between two nodes if the products are purchased by a user in a single basket. Dblp\footref{snap_repository} is a co-authorship network in which two authors are connected if they have co-authored at least one publication. Youtube\footref{snap_repository} graph shows the friendship network of users in the Youtube social network. Skitter\footref{snap_repository} is an internet topology network where nodes correspond to autonomous systems and communication between them constitutes edges.
	
	\begin{table}[h!]
		\centering
		\begin{tabular}{ p{5.1cm}p{1.8cm}p{1.8cm}{l} }
			
			\hline
			\textbf{Network} & \textbf{Number of Nodes}  & \textbf{Number of Edges} & $\lambda_{max}(A)$\\
			\hline
			Applied Mathematics and \newline Computing (AMC)
			& 18,371 & 24,224 & 10.99\\
			
			Decision Support Systems (DSS) & 4,926 & 14,660 & 12.0\\
			
			Ecological Informatics (EI) & 1,990 & 4,913 & 16.68\\
			
			Communication ACM & 11,476 & 16,687 & 32.90\\
			
			\hline
			
		\end{tabular}
		\caption{Statistics of AA subgraphs}
		\label{table2}
	\end{table}
	
The dataset AA\footnote{http://dblp.uni-trier.de/xml/} is a co-authorship network extracted from DBLP archive data. We select $4$ different smaller co-authorship subgraphs each corresponding to manuscripts in a distinct journal. Node count goes up to a few thousands and edge count goes up to a few ten thousands for extracted subgraphs. Details of the subgraphs of AA data set are provided in Table \ref{table2}.

	\subsection{Approximation Quality of $\textsc{Walk-8}$}
	In order to evaluate the goodness of our approximate method, we compare it with the exact solution as described in Theorem \ref{walks_closedForm}. The exact number of closed walks of length $8$ can be computed using the original adjacency matrix $A$ as given in Theorem \ref{walks_closedForm} instead of using a summary graph. We analyze the quality of our approximation method by comparing the eigendrop percentages achieved using the exact and approximate method. We report comparison results of the exact solution with the summary graphs of order $\{100,500,1000\}$.
	
	\pgfplotsset{title style={at={(0.85,0.75)}}}
	\pgfplotsset{every x tick label/.append style={font=\tiny}}
	\pgfplotsset{every y tick label/.append style={font=\tiny}}
	\pgfplotsset{compat=1.5}
	\noindent
	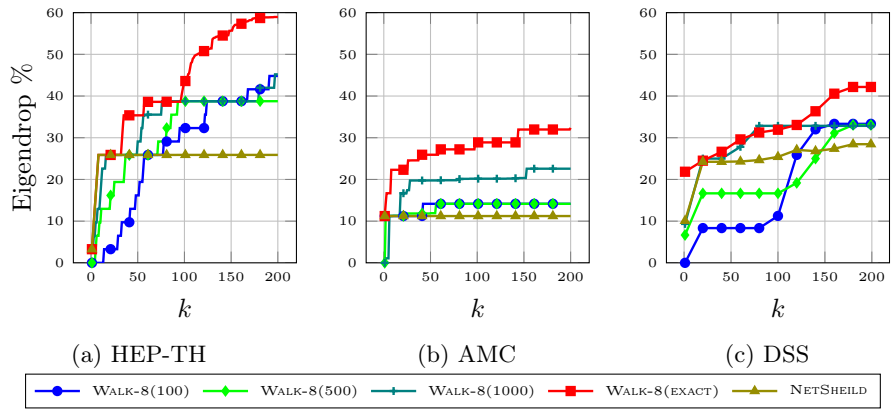
\begin{figure}[h!]
		\centering
		\begin{subfigure}[b]{0.3\columnwidth}
			\begin{tikzpicture}   
			\begin{axis}[ 
			xlabel={$k$},
			ylabel={Eigendrop \%}, ylabel shift=-3pt,
			xtick={0, 50, 100, 150, 200},
			ytick={0,10,20,30,40,50,60},
			height=1.35\columnwidth, width=1.25\columnwidth, grid=major,
			ymin=0,
			ymax=60,
			legend columns=-1,
			legend style={
				column sep=1ex,
			},
			legend entries={\textsc{NetSheild},\textsc{Walk$_8(100)$},\textsc{Walk$_8(500)$}, \textsc{Walk$_8(1000)$},\textsc{Walk$_8$(exact)}},
			mark repeat={20},
			legend to name=commonlegend
			]
			
			\addplot[blue,mark size=1.5pt, mark= *, solid,line width=0.9] table[x={x},y={cw8_100}]{hep_th_approximation_comparison.csv};
			
			\addplot[green,mark size=1.5pt, mark = diamond*, solid ,line width=0.9] table[x={x},y={cw8_500}]{hep_th_approximation_comparison.csv};
			
			\addplot[teal,mark size=1.5pt, mark= +, solid,line width=0.9] table[x={x},y={cw8_1000}]{hep_th_approximation_comparison.csv};
			
			\addplot[red,mark size=1.5pt, mark = square*, solid ,line width=0.9] table[x={x},y={cw_8Exact}]{hep_th_approximation_comparison.csv};
			
			\addplot[olive,mark size=1.5pt, mark =triangle*, line width=0.9] table[x={x},y={NetSheild}]{hep_th_approximation_comparison.csv};
			\end{axis}
			\end{tikzpicture}%
			\caption{HEP-TH}
		\end{subfigure}
		\hspace*{.04\textwidth}
		\begin{subfigure}[b]{0.3\columnwidth}
			\begin{tikzpicture}   
			\begin{axis}[ 
			xlabel={$k$},
			xtick={0, 50, 100, 150, 200},
			ytick={0,10,20,30,40,50,60},
			height=1.35\columnwidth, width=1.25\columnwidth, grid=major,
			ymin=0,
			ymax=60,
			legend columns=-1,
			legend style={
				column sep=1ex,
			},
			mark repeat={20},
			legend to name=commonlegend
			]
			
			\addplot[blue,mark size=1.5pt, mark= *, solid,line width=0.9] table[x={x},y={cw8_100}]{AppliedMathematicsAndComputing_approximation_comparison.csv};
			
			\addplot[green,mark size=1.5pt, mark = diamond*, solid ,line width=0.9] table[x={x},y={cw8_500}]{AppliedMathematicsAndComputing_approximation_comparison.csv};
			
			\addplot[teal,mark size=1.5pt, mark= +, solid,line width=0.9] table[x={x},y={cw8_1000}]{AppliedMathematicsAndComputing_approximation_comparison.csv};
			
			\addplot[red,mark size=1.5pt, mark = square*, solid ,line width=0.9] table[x={x},y={cw_8Exact}]{AppliedMathematicsAndComputing_approximation_comparison.csv};
			
			\addplot[olive,mark size=1.5pt, mark =triangle*, line width=0.9] table[x={x},y={NetSheild}]{AppliedMathematicsAndComputing_approximation_comparison.csv};
			\end{axis}
			\end{tikzpicture}
			\caption{AMC}
		\end{subfigure}
		\hspace*{.01\textwidth}
		\begin{subfigure}[b]{0.3\columnwidth}
			\begin{tikzpicture}   
			\begin{axis}[ 
			xlabel={$k$},
			xtick={0, 50, 100, 150, 200},
			ytick={0,10,20,30,40,50,60},
			height=1.35\columnwidth, width=1.25\columnwidth, grid=major,
			ymin=0,
			ymax=60,
			legend columns=-1,
			legend style={
				column sep=1ex,
			},
			legend entries={\textsc{Walk-$8(100)$},\textsc{Walk-$8(500)$}, \textsc{Walk-$8(1000)$},\textsc{Walk-$8$(exact)},\textsc{NetSheild}},
			legend to name=commonlegend
			]
			
			\addplot[blue,mark size=1.5pt, mark= *, solid,line width=0.9] table[x={x},y={cw8_100}]{DecisionSupportSystems_clusteringEffect_1.csv};
			
			\addplot[green,mark size=1.5pt, mark = diamond*, solid ,line width=0.9] table[x={x},y={cw8_500}]{DecisionSupportSystems_clusteringEffect_1.csv};
			
			\addplot[teal,mark size=1.5pt, mark= +, solid,line width=0.9] table[x={x},y={cw8_1000}]{DecisionSupportSystems_clusteringEffect_1.csv};
			
			\addplot[red,mark size=1.5pt, mark = square*, solid ,line width=0.9] table[x={x},y={cw_8Exact}]{DecisionSupportSystems_clusteringEffect_1.csv};
			
			\addplot[olive,mark size=1.5pt, mark =triangle*, line width=0.9] table[x={x},y={NetSheild}]{DecisionSupportSystems_clusteringEffect_1.csv};
			\end{axis}
			\end{tikzpicture}
			\caption{DSS}
		\end{subfigure}
		\\\tiny\ref{commonlegend}
		\caption{The effect of the order of summary graph on the quality of the approximation. Eigen-drop percentages using different numbers of supernodes have been reported ($\textsc{Walk-8}(t)$, where $t$ is the number of supernodes).  It is clear that as $t$ increases, the quality of approximation tends to match with that of the exact solution.} \label{fig:exactvsapproxcomparison}
	\end{figure}

	It is clear from Figure \ref{fig:exactvsapproxcomparison} that the performance of our approximate method improves with the increase in the number of supernodes in the summary graph. As the order of the summary graph increases, the achieved benefit tends to match with that of the exact solution. Note that we compute the exact number of walks for small graphs having the order of a few thousands only as it is computationally infeasible to compute the exact solution for large graphs.

	\pgfplotsset{title style={at={(0.85,0.75)}}}
	\pgfplotsset{every x tick label/.append style={font=\tiny}}
	\pgfplotsset{every y tick label/.append style={font=\tiny}}
	\pgfplotsset{compat=1.5}
	\noindent
	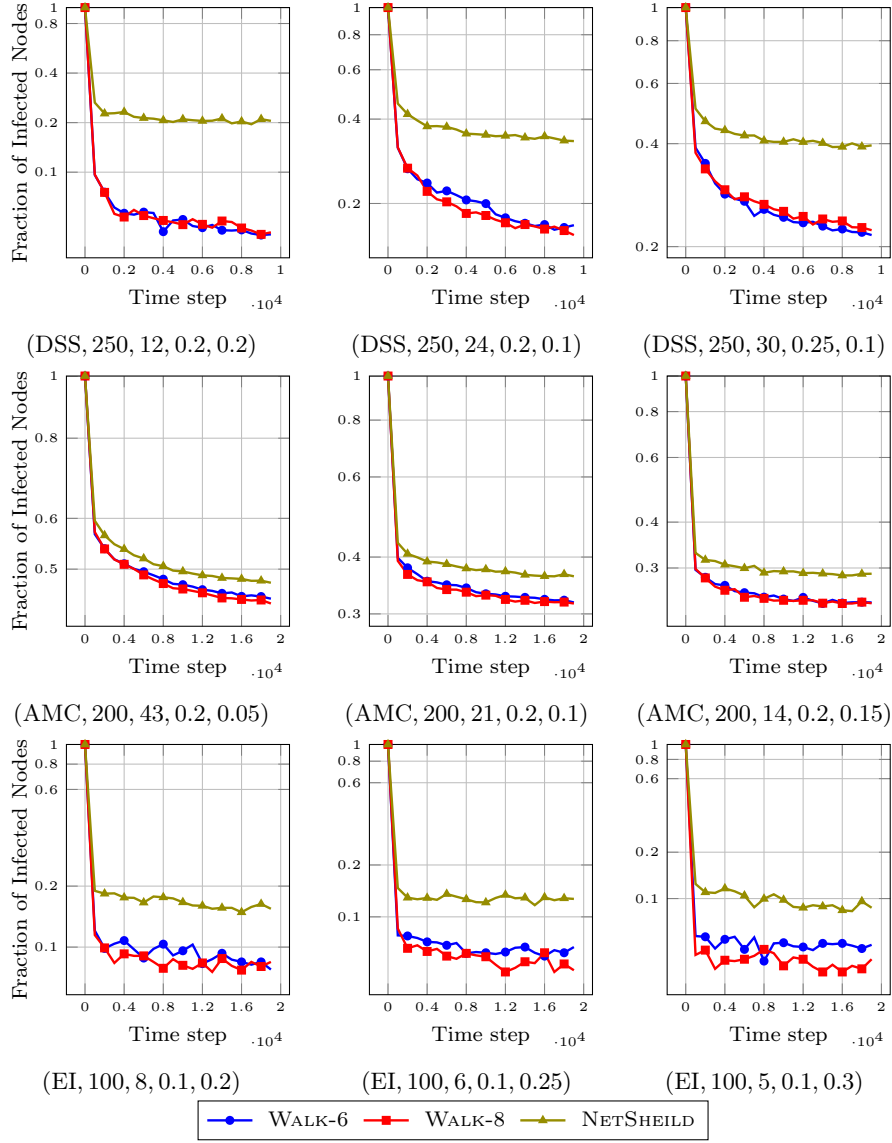
\begin{figure}[h!]
		\centering\footnotesize
		
		\begin{subfigure}[b]{0.3\columnwidth}
			\begin{tikzpicture}   
			\begin{axis}[ 
			xlabel={Time step},
			ylabel={Fraction of Infected Nodes}, ylabel shift=-3pt,
			xtick={0, 2000, 4000, 6000, 8000,10000},
			ytick={0,0.1,0.2,0.4,0.8,1},
			height=1.35\columnwidth, width=1.25\columnwidth, grid=major,
			ymin=0,
			ymax=1,
			log ticks with fixed point,
			ymode=log,
			legend columns=-1,
			legend style={
				column sep=1ex,
			},
			mark repeat={2},
			legend to name=commonlegend
			]
			\addplot[blue,mark size=1.25pt, mark = *, line width=0.9] table[x={x},y={cw6}]{DecisionSupportSystems_250_0.2_0.2_10000_truncated.csv};
			\addplot[red,mark size=1.25pt, mark= square*, solid,line width=0.9] table[x={x},y={cw8}]{DecisionSupportSystems_250_0.2_0.2_10000_truncated.csv};
			\addplot[olive,mark size=1.25pt, mark = triangle*, solid ,line width=0.9] table[x={x},y={ns}]{DecisionSupportSystems_250_0.2_0.2_10000_truncated.csv};
			\end{axis}
			\end{tikzpicture}%
			\tiny{\subcaption*{$(\text{DSS},250,12,0.2,0.2)$}}  
		\end{subfigure}
		\hspace*{.04\textwidth}
		\begin{subfigure}[b]{0.3\columnwidth}
			\begin{tikzpicture}  
			\begin{axis}[
			xlabel={Time step},
			ylabel shift=-3pt,
			xtick={0, 2000, 4000, 6000, 8000,10000},
			ytick={0,0.2,0.4,0.6,0.8,1},
			height=1.35\columnwidth, width=1.25\columnwidth, grid=major,
			ymin=0, ymax=1,
			ymode = log,
			log ticks with fixed point,
			legend columns=-1,
			legend style={
				column sep=1ex,
			},
			mark repeat={2},
			legend to name=commonlegend
			]
			\addplot[blue,mark size=1.25pt, mark = *, line width=0.9] table[x={x},y={cw6}]{DecisionSupportSystems_250_0.2_0.1_10000_truncated.csv};
			
			\addplot[red,mark size=1.25pt, mark= square*, every mark/.append style={solid}, line width=0.9] table[x={x},y={cw8}]{DecisionSupportSystems_250_0.2_0.1_10000_truncated.csv};
			
			\addplot[olive,mark size=1.25pt, mark = triangle*, every mark/.append style={solid},line width=0.9] table[x={x},y={ns}]{DecisionSupportSystems_250_0.2_0.1_10000_truncated.csv};
			\end{axis}
			\end{tikzpicture}
			\footnotesize\subcaption*{{$(\text{DSS},250,24,0.2,0.1)$}}
		\end{subfigure}
		\hspace*{.01\textwidth}
		\begin{subfigure}[b]{0.3\columnwidth}
			\begin{tikzpicture}  
			
			\begin{axis}[
			xlabel={Time step},
			height=1.35\columnwidth, width=1.25\columnwidth, grid=major,
			ymin=0, ymax=1,
			ymode = log,
			log ticks with fixed point,
			legend columns=-1,
			legend style={
				column sep=1ex,
			},
			xtick={0, 2000, 4000, 6000, 8000,10000},
			ytick={0,0.2,0.4,0.6,0.8,1},
			legend entries={\textsc{Walk$_6$}, \textsc{Walk$_8$}, \textsc{NetSheild}},
			mark repeat={2},
			legend to name=commonlegend
			]
			\addplot[blue,mark size=1.25pt, mark = *, line width = 0.9] table[x={x},y={cw6}]{DecisionSupportSystems_250_0.25_0.1_10000_truncated.csv};
			
			\addplot[red,mark size=1.25pt, mark= square*, solid , line width = 0.9] table[x={x},y={cw8}]{DecisionSupportSystems_250_0.25_0.1_10000_truncated.csv};
			
			\addplot[olive,mark size=1.25pt, mark = triangle*, solid, line width = 0.9] table[x={x},y={ns}]{DecisionSupportSystems_250_0.25_0.1_10000_truncated.csv};
			\end{axis}
			
			\end{tikzpicture}
			\subcaption*{$(\text{DSS},250,30,0.25,0.1)$}
		\end{subfigure}
		
		\begin{subfigure}[b]{0.3\columnwidth}
			\begin{tikzpicture}   
			\begin{axis}[ 
			xlabel={Time step},
			ylabel={Fraction of Infected Nodes}, ylabel shift=-3pt,
			xtick={0, 4000, 8000, 12000, 16000,20000},
			ytick={0,0.4,0.5,0.6,0.8,1},
			height=1.35\columnwidth, width=1.25\columnwidth, grid=major,
			ymin=0,
			ymax=1,
			log ticks with fixed point,
			ymode=log,
			legend columns=-1,
			legend style={
				column sep=1ex,
			},
			mark repeat={2},
			legend to name=commonlegend
			]
			\addplot[blue,mark size=1.25pt, mark = *, line width=0.9] table[x={x},y={cw6}]{AppliedMathematicsAndComputing_200_0.2_0.05_20000_truncated.csv};
			\addplot[red,mark size=1.25pt, mark= square*, solid,line width=0.9] table[x={x},y={cw8}]{AppliedMathematicsAndComputing_200_0.2_0.05_20000_truncated.csv};
			\addplot[olive,mark size=1.25pt, mark = triangle*, solid ,line width=0.9] table[x={x},y={ns}]{AppliedMathematicsAndComputing_200_0.2_0.05_20000_truncated.csv};
			\end{axis}
			\end{tikzpicture}%
			\tiny{\subcaption*{$(\text{AMC},200,43,0.2,0.05)$}} 
		\end{subfigure}
		\hspace*{.04\textwidth}
		\begin{subfigure}[b]{0.3\columnwidth}
			\begin{tikzpicture}  
			\begin{axis}[
			xlabel={Time step},
			ylabel shift=-3pt,
			xtick={0, 4000, 8000, 12000, 16000,20000},
			ytick={0,0.3,0.4,0.6,0.8,1},
			height=1.35\columnwidth, width=1.25\columnwidth, grid=major,
			ymin=0, ymax=1,
			ymode = log,
			log ticks with fixed point,
			legend columns=-1,
			legend style={
				column sep=1ex,
			},
			mark repeat={2},
			legend to name=commonlegend
			]
			\addplot[blue,mark size=1.25pt, mark = *, line width=0.9] table[x={x},y={cw6}]{AppliedMathematicsAndComputing_200_0.2_0.1_20000_truncated.csv};
			
			\addplot[red,mark size=1.25pt, mark= square*, every mark/.append style={solid}, line width=0.9] table[x={x},y={cw8}]{AppliedMathematicsAndComputing_200_0.2_0.1_20000_truncated.csv};
			
			\addplot[olive,mark size=1.25pt, mark = triangle*, every mark/.append style={solid},line width=0.9] table[x={x},y={ns}]{AppliedMathematicsAndComputing_200_0.2_0.1_20000_truncated.csv};
			\end{axis}
			\end{tikzpicture}
			\footnotesize\subcaption*{{$(\text{AMC},200,21,0.2,0.1)$}}
		\end{subfigure}
		\hspace*{.01\textwidth}
		\begin{subfigure}[b]{0.3\columnwidth}
			\begin{tikzpicture}  
			
			\begin{axis}[
			xlabel={Time step},
			height=1.35\columnwidth, width=1.25\columnwidth, grid=major,
			ymin=0, ymax=1,
			ymode = log,
			log ticks with fixed point,
			legend columns=-1,
			legend style={
				column sep=1ex,
			},
			xtick={0, 4000, 8000, 12000, 16000,20000},
			ytick={0,0.3,0.4,0.6,0.8,1},
			legend entries={\textsc{Walk$_6$}, \textsc{Walk$_8$}, \textsc{NetSheild}},
			mark repeat={2},
			legend to name=commonlegend
			]
			\addplot[blue,mark size=1.25pt, mark = *, line width = 0.9] table[x={x},y={cw6}]{AppliedMathematicsAndComputing_200_0.2_0.15_20000_truncated.csv};
			
			\addplot[red,mark size=1.25pt, mark= square*, solid , line width = 0.9] table[x={x},y={cw8}]{AppliedMathematicsAndComputing_200_0.2_0.15_20000_truncated.csv};
			
			\addplot[olive,mark size=1.25pt, mark = triangle*, solid, line width = 0.9] table[x={x},y={ns}]{AppliedMathematicsAndComputing_200_0.2_0.15_20000_truncated.csv};
			\end{axis}
			
			\end{tikzpicture}
			\subcaption*{$(\text{AMC},200,14,0.2,0.15)$}
		\end{subfigure}
		
		\begin{subfigure}[b]{0.3\columnwidth}
			\begin{tikzpicture}   
			\begin{axis}[ 
			xlabel={Time step},
			ylabel={Fraction of Infected Nodes}, ylabel shift=-3pt,
			xtick={0, 4000, 8000, 12000, 16000,20000},
			ytick={0,0.1,0.2,0.6,0.8,1},
			height=1.35\columnwidth, width=1.25\columnwidth, grid=major,
			ymin=0,
			ymax=1,
			log ticks with fixed point,
			ymode=log,
			legend columns=-1,
			legend style={
				column sep=1ex,
			},
			mark repeat={2},
			legend to name=commonlegend
			]
			\addplot[blue,mark size=1.25pt, mark = *, line width=0.9] table[x={x},y={cw6}]{EcologicalInformatics_100_0.1_0.2_20000_truncated.csv};
			\addplot[red,mark size=1.25pt, mark= square*, solid,line width=0.9] table[x={x},y={cw8}]{EcologicalInformatics_100_0.1_0.2_20000_truncated.csv};
			\addplot[olive,mark size=1.25pt, mark = triangle*, solid ,line width=0.9] table[x={x},y={ns}]{EcologicalInformatics_100_0.1_0.2_20000_truncated.csv};
			\end{axis}
			\end{tikzpicture}%
			\tiny{\subcaption*{$(\text{EI},100,8,0.1,0.2)$}} 
		\end{subfigure}
		\hspace*{.04\textwidth}
		\begin{subfigure}[b]{0.3\columnwidth}
			\begin{tikzpicture}  
			\begin{axis}[
			xlabel={Time step},
			ylabel shift=-3pt,
			xtick={0, 4000, 8000, 12000, 16000,20000},
			ytick={0,0.1,0.2,0.6,0.8,1},
			height=1.35\columnwidth, width=1.25\columnwidth, grid=major,
			ymin=0, ymax=1,
			ymode = log,
			log ticks with fixed point,
			legend columns=-1,
			legend style={
				column sep=1ex,
			},
			mark repeat={2},
			legend to name=commonlegend
			]
			\addplot[blue,mark size=1.25pt, mark = *, line width=0.9] table[x={x},y={cw6}]{EcologicalInformatics_100_0.1_0.25_20000_truncated.csv};
			
			\addplot[red,mark size=1.25pt, mark= square*, every mark/.append style={solid}, line width=0.9] table[x={x},y={cw8}]{EcologicalInformatics_100_0.1_0.25_20000_truncated.csv};
			
			\addplot[olive,mark size=1.25pt, mark = triangle*, every mark/.append style={solid},line width=0.9] table[x={x},y={ns}]{EcologicalInformatics_100_0.1_0.25_20000_truncated.csv};
			\end{axis}
			\end{tikzpicture}
			\footnotesize\subcaption*{{$(\text{EI},100,6,0.1,0.25)$}}
		\end{subfigure}
		\hspace*{.01\textwidth}
		\begin{subfigure}[b]{0.3\columnwidth}
			\begin{tikzpicture}  
			
			\begin{axis}[
			xlabel={Time step},
			height=1.35\columnwidth, width=1.25\columnwidth, grid=major,
			ymin=0, ymax=1,
			ymode = log,
			log ticks with fixed point,
			legend columns=-1,
			legend style={
				column sep=1ex,
			},
			xtick={0, 4000, 8000, 12000, 16000,20000},
			ytick={0,0.1,0.2,0.6,0.8,1},
			legend entries={\textsc{Walk-$6$}, \textsc{Walk-$8$}, \textsc{NetSheild}},
			mark repeat={2},
			legend to name=commonlegend
			]
			\addplot[blue,mark size=1.25pt, mark = *, line width = 0.9] table[x={x},y={cw6}]{EcologicalInformatics_100_0.1_0.3_20000_truncated.csv};
			
			\addplot[red,mark size=1.25pt, mark= square*, solid , line width = 0.9] table[x={x},y={cw8}]{EcologicalInformatics_100_0.1_0.3_20000_truncated.csv};
			
			\addplot[olive,mark size=1.25pt, mark = triangle*, solid, line width = 0.9] table[x={x},y={ns}]{EcologicalInformatics_100_0.1_0.3_20000_truncated.csv};
			\end{axis}
			
			\end{tikzpicture}
			\subcaption*{$(\text{EI},100,5,0.1,0.3)$}
		\end{subfigure}
		
		\ref{commonlegend}
		\caption{Virus propagation simulation for varying virus strength $s$ on the immunized version of graphs. Caption of each plot represents (graph name, number of immunized nodes $k$, $s$, infection rate $\beta$, recovery rate $\delta$). Initially, all the nodes in the graphs were contaminated and the plots show the fraction of infected nodes ($y$-axis logged scale) as the time proceeds. }
		\label{fig:virusPropagation}
	\end{figure}
	\subsection{Virus Spread Simulation}
	Another criterion used for quality evaluation is to estimate the spread of virus propagation in the immunized version of the graph. We use \textsc{SIR} virus propagation model to observe the spread of contagion after immunizing a small subset ($\sim$ 5 \%) of nodes in a graph. Let $s = \lambda_{max} \times \beta/\delta$ be the virus strength (larger value of $s$ corresponds to more strength of virus while the virus gradually dies out if $s\le 1$), where $\beta$ and $\delta$ denote the infection and recovery rate respectively. In our experimentation, we immunize $k$ nodes in a graph and infect all the nodes in the immunized version of the graph. We then observe the spread of the virus under different virus strengths with varying values of $\beta$ and $\delta$. Results in Figure \ref{fig:virusPropagation} show that the graphs immunized by our approach have less number of infected nodes as compared to \textsc{NetSheild}. We report the average of $3$ runs of experiments to mitigate the effect of randomness.

	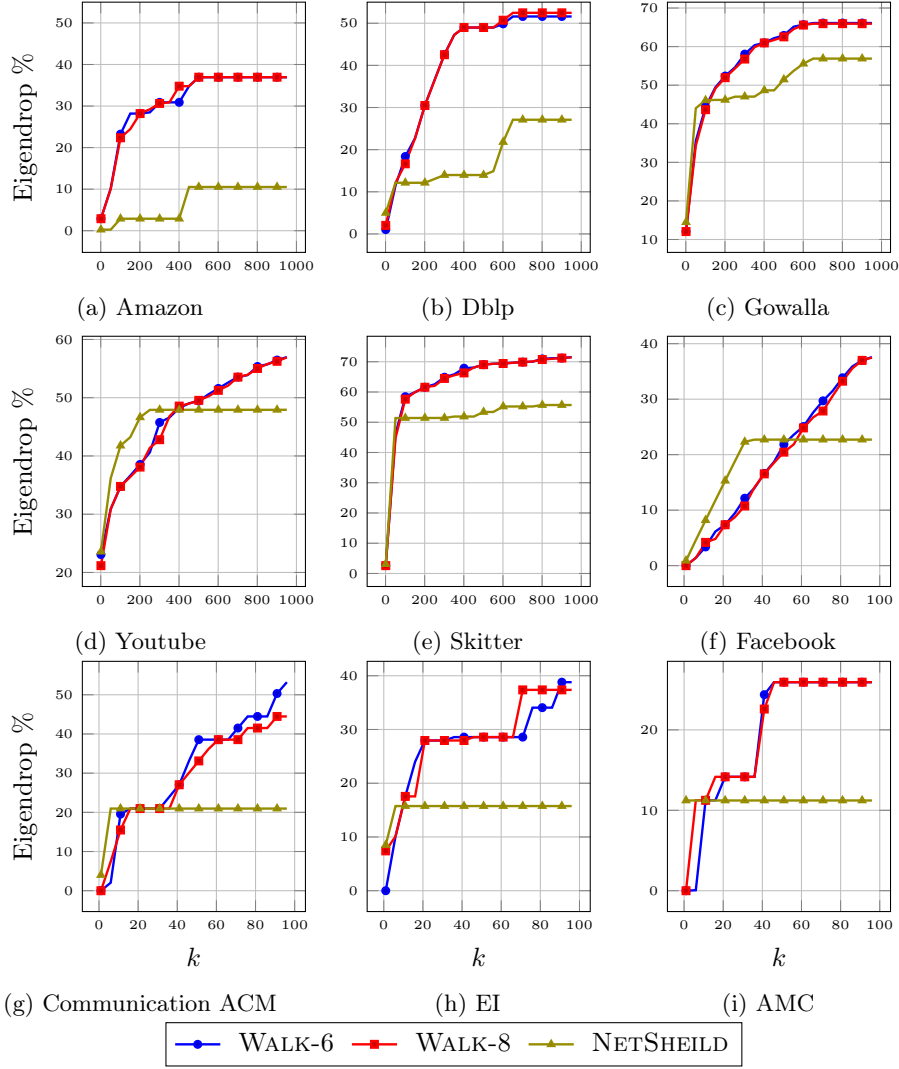
\begin{figure}[h!]  
		\centering 
		\begin{subfigure}[b]{0.3\columnwidth}
			\begin{tikzpicture}   
			\begin{axis}[title={},
			ylabel={Eigendrop \%},
			ymax = 55,
			xtick={0, 200, 400, 600, 800,1000},
			ytick={0,10,20,30,40,50},
			height=1.35\columnwidth, width=1.25\columnwidth, grid=major,
			legend columns=-1,
			x tick label style={/pgf/number format/.cd,%
				set thousands separator={}},
			legend style={
				column sep=1ex,
			},
			mark repeat={2},
			legend to name=commonlegend_eigen_drop_percentage
			]
			\addplot+[blue,mark size=1.25pt, mark = *, solid ,line width=0.9] table[x={x},y={cw6}]{amazon_eigenDropPercentage.csv};
			\addplot+[red,mark size=1.25pt, mark = square*, solid ,line width=0.9] table[x={x},y={cw8}]{amazon_eigenDropPercentage.csv};
			\addplot+[olive,mark size=1.25pt, mark = triangle*, solid ,line width=0.9] table[x={x},y={ns}]{amazon_eigenDropPercentage.csv};
			\end{axis} 
			\end{tikzpicture}%
			\caption{Amazon} \label{fig:test1}  
		\end{subfigure}
		\hspace*{.04\textwidth}
		\begin{subfigure}[b]{0.3\columnwidth}
			\begin{tikzpicture}  
			\begin{axis}[title={},
			ylabel={},
			ymax = 55,
			xtick={0, 200, 400, 600, 800,1000},
			ytick={0,10,20,30,40,50},
			height=1.35\columnwidth, width=1.25\columnwidth, grid=major,
			legend columns=-1,
			x tick label style={/pgf/number format/.cd,%
				set thousands separator={}},
			legend style={
				column sep=1ex,
			},
			mark repeat={2},
			legend to name=commonlegend_eigen_drop_percentage
			]
			\addplot+[blue,mark size=1.25pt, mark = *, solid ,line width=0.9] table[x={x},y={cw6}]{dblp_eigenDropPercentage.csv};
			\addplot+[red,mark size=1.25pt, mark = square*, solid ,line width=0.9] table[x={x},y={cw8}]{dblp_eigenDropPercentage.csv};
			\addplot+[olive,mark size=1.25pt, mark = triangle*, solid ,line width=0.9] table[x={x},y={ns}]{dblp_eigenDropPercentage.csv};
			\end{axis} 
			\end{tikzpicture}
			\caption{Dblp} \label{fig:test2}  
		\end{subfigure}
		\hspace*{.01\textwidth}
		\begin{subfigure}[b]{0.3\columnwidth}
			\begin{tikzpicture}  
			\begin{axis}[title={},
			ylabel={},
			xtick={0, 200, 400, 600, 800,1000},
			ytick={0,10,20,30,40,50,60,70},
			height=1.35\columnwidth, width=1.25\columnwidth, grid=major,
			legend columns=-1,
			x tick label style={/pgf/number format/.cd,%
				set thousands separator={}},
			legend style={
				column sep=1ex,
			},
			mark repeat={2},
			legend to name=commonlegend_eigen_drop_percentage
			]
			\addplot+[blue,mark size=1.25pt, mark = *, solid ,line width=0.9] table[x={x},y={cw6}]{gowalla_eigenDropPercentage.csv};
			\addplot+[red,mark size=1.25pt, mark = square*, solid ,line width=0.9] table[x={x},y={cw8}]{gowalla_eigenDropPercentage.csv};
			\addplot+[olive,mark size=1.25pt, mark = triangle*, solid ,line width=0.9] table[x={x},y={ns}]{gowalla_eigenDropPercentage.csv};
			\end{axis} 
			\end{tikzpicture}
			\caption{Gowalla} \label{fig:test3}  
		\end{subfigure}

		
		\begin{subfigure}[b]{0.3\columnwidth}
			\begin{tikzpicture}   
			\begin{axis}[title={},
			ylabel={Eigendrop \%}, 
			xtick={0, 200, 400, 600, 800,1000},
			ytick={0,10,20,30,40,50,60},
			height=1.35\columnwidth, width=1.25\columnwidth, grid=major,
			legend columns=-1,
			x tick label style={/pgf/number format/.cd,%
				set thousands separator={}},
			legend style={
				column sep=1ex,
			},
			mark repeat={2},
			legend to name=commonlegend_eigen_drop_percentage
			]
			\addplot+[blue,mark size=1.25pt, mark = *, solid ,line width=0.9] table[x={x},y={cw6}]{youtube_eigenDropPercentage.csv};
			\addplot+[red,mark size=1.25pt, mark = square*, solid ,line width=0.9] table[x={x},y={cw8}]{youtube_eigenDropPercentage.csv};
			\addplot+[olive,mark size=1.25pt, mark = triangle*, solid ,line width=0.9] table[x={x},y={ns}]{youtube_eigenDropPercentage.csv};
			\end{axis} 
			\end{tikzpicture}%
			\caption{Youtube} \label{fig:test4}  
		\end{subfigure}
		\hspace*{.04\textwidth}
		\begin{subfigure}[b]{0.3\columnwidth}
			\begin{tikzpicture}  
			\begin{axis}[title={},
			ylabel={},
			xtick={0, 200, 400, 600, 800,1000},
			ytick={0,10,20,30,40,50,60,70},
			height=1.35\columnwidth, width=1.25\columnwidth, grid=major,
			legend columns=-1,
			x tick label style={/pgf/number format/.cd,%
				set thousands separator={}},
			legend style={
				column sep=1ex,
			},
			mark repeat={2},
			legend to name=commonlegend_eigen_drop_percentage
			]
			\addplot+[blue,mark size=1.25pt, mark = *, solid ,line width=0.9] table[x={x},y={cw6}]{skitter_eigenDropPercentage.csv};
			\addplot+[red,mark size=1.25pt, mark = square*, solid ,line width=0.9] table[x={x},y={cw8}]{skitter_eigenDropPercentage.csv};
			\addplot+[olive,mark size=1.25pt, mark = triangle*, solid ,line width=0.9] table[x={x},y={ns}]{skitter_eigenDropPercentage.csv};
			\end{axis} 
			\end{tikzpicture}
			\caption{Skitter} \label{fig:test5}  
		\end{subfigure}
		\hspace*{.01\textwidth}
		\begin{subfigure}[b]{0.3\columnwidth}
			\begin{tikzpicture}  
			\begin{axis}[title={},
			ylabel={},
			xtick={0, 20, 40, 60, 80,100},
			ytick={0,10,20,30,40,50,60,70},
			height=1.35\columnwidth, width=1.25\columnwidth, grid=major,
			legend columns=-1,
			x tick label style={/pgf/number format/.cd,%
				set thousands separator={}},
			legend style={
				column sep=1ex,
			},
			mark repeat={2},
			legend to name=commonlegend_eigen_drop_percentage
			]
			\addplot+[blue,mark size=1.25pt, mark = *, solid ,line width=0.9] table[x={x},y={cw6}]{facebook_eigenDropPercentage.csv};
			\addplot+[red,mark size=1.25pt, mark = square*, solid ,line width=0.9] table[x={x},y={cw8}]{facebook_eigenDropPercentage.csv};
			\addplot+[olive,mark size=1.25pt, mark = triangle*, solid ,line width=0.9] table[x={x},y={ns}]{facebook_eigenDropPercentage.csv};
			\end{axis} 
			\end{tikzpicture}
			\caption{Facebook} \label{fig:test6}  
		\end{subfigure}

		
		
		\begin{subfigure}[b]{0.3\columnwidth}
			\begin{tikzpicture}   
			\begin{axis}[title={},
			ylabel={Eigendrop \%},
			xlabel={$k$},
			xtick={0, 20, 40, 60, 80,100},
			ytick={0,10,20,30,40,50,60},
			height=1.35\columnwidth, width=1.25\columnwidth, grid=major,
			legend columns=-1,
			x tick label style={/pgf/number format/.cd,%
				set thousands separator={}},
			legend style={
				column sep=1ex,
			},
			mark repeat={2},
			legend to name=commonlegend_eigen_drop_percentage
			]
			\addplot+[blue,mark size=1.25pt, mark = *, solid ,line width=0.9] table[x={x},y={cw6}]{CommunicationACM_eigenDropPercentage.csv};
			\addplot+[red,mark size=1.25pt, mark = square*, solid ,line width=0.9] table[x={x},y={cw8}]{CommunicationACM_eigenDropPercentage.csv};
			\addplot+[olive,mark size=1.25pt, mark = triangle*, solid ,line width=0.9] table[x={x},y={ns}]{CommunicationACM_eigenDropPercentage.csv};
			\end{axis} 
			\end{tikzpicture}%
			\caption{Communication ACM} \label{fig:test7}  
		\end{subfigure}
		\hspace*{.04\textwidth}
		\begin{subfigure}[b]{0.3\columnwidth}
			\begin{tikzpicture}  
			\begin{axis}[title={},
			ylabel={},
			xlabel={$k$},
			xtick={0, 20, 40, 60, 80,100},
			ytick={0,10,20,30,40},
			height=1.35\columnwidth, width=1.25\columnwidth, grid=major,
			legend columns=-1,
			x tick label style={/pgf/number format/.cd,%
				set thousands separator={}},
			legend style={
				column sep=1ex,
			},
			mark repeat={2},
			legend to name=commonlegend_eigen_drop_percentage
			]
			\addplot+[blue,mark size=1.25pt, mark = *, solid ,line width=0.9] table[x={x},y={cw6}]{ecologicalInformatics_eigenDropPercentage.csv};
			\addplot+[red,mark size=1.25pt, mark = square*, solid ,line width=0.9] table[x={x},y={cw8}]{ecologicalInformatics_eigenDropPercentage.csv};
			\addplot+[olive,mark size=1.25pt, mark = triangle*, solid ,line width=0.9] table[x={x},y={ns}]{ecologicalInformatics_eigenDropPercentage.csv};
			\end{axis} 
			\end{tikzpicture}
			\caption{EI} \label{fig:test8}  
		\end{subfigure}
		\hspace*{.01\textwidth}
		\begin{subfigure}[b]{0.3\columnwidth}
			\begin{tikzpicture}  
			\begin{axis}[title={},
			ylabel={},
			xlabel={$k$},
			xtick={0, 20, 40, 60, 80,100},
			ytick={0,10,20,30},
			height=1.35\columnwidth, width=1.25\columnwidth, grid=major,
			legend columns=-1,
			x tick label style={/pgf/number format/.cd,%
				set thousands separator={}},
			legend style={
				column sep=1ex,
			},
			legend entries={\textsc{Walk-$6$}, \textsc{Walk-$8$},\textsc{NetSheild}},
			mark repeat={2},
			legend to name=commonlegend_eigen_drop_percentage
			]
			\addplot+[blue,mark size=1.25pt, mark = *, solid ,line width=0.9] table[x={x},y={cw6}]{AppliedMathematicsAndComputing_eigenDropPercentage.csv};
			\addplot+[red,mark size=1.25pt, mark = square*, solid ,line width=0.9] table[x={x},y={cw8}]{AppliedMathematicsAndComputing_eigenDropPercentage.csv};
			\addplot+[olive,mark size=1.25pt, mark = triangle*, solid ,line width=0.9] table[x={x},y={ns}]{AppliedMathematicsAndComputing_eigenDropPercentage.csv};
			\end{axis} 
			\end{tikzpicture}
			\caption{AMC} \label{fig:test9}  
		\end{subfigure}
		
		\ref{commonlegend_eigen_drop_percentage}
		\caption{Comparison of $\textsc{NetSheild}$, $\textsc{Walk-6}$ and $\textsc{Walk-8}$ in terms of eigendrop percentages ($y$-axis) against budget $k$, number of nodes immunized, ($x$-axis). $\textsc{Walk-6}$ and $\textsc{Walk-8}$ achieve significantly higher eigendrop for increasing $k$. Results in (a)-(e) are computed using $1000$ supernodes while in (f)-(i) experiments are performed using summary graph of order $= 500$. The range for $k$ is chosen keeping in view the number of nodes in the host graphs.} \label{fig:eigenDropPercentages}
	\end{figure}  
	
	\subsection{EigenDrop Percentage Comparison}
	We compare the quality of approximate versions of our algorithms with \textsc{NetSheild} in terms of eigendrop and results are shown in Figure \ref{fig:eigenDropPercentages}. For smaller graphs and subgraphs of AA which consist of a few thousand nodes, a budget of up to $100$ nodes is used and for large graphs with more than $100,000$ nodes, we immunize up to $1000$ nodes. We have used summary graphs with different orders $(100,500,1000)$ to perform experiments. Time complexity increases as the number of supernodes increases but we observe that there is a proportionately minor improvement in the quality of solution for increasing order of graph after a certain threshold is reached. For smaller graphs, we report results for supernode count of $500$ and for large graphs, the number of supernodes is set to $1000$.
	
	We observe that the immunizing quality of our algorithm clearly outperforms \textsc{NetSheild} in terms of eigendrop. The improvement in quality of solution is particularly evident on large graphs Gowalla Figure~\ref{fig:test3}, Youtube Figure~\ref{fig:test4}, and Skitter Figure~\ref{fig:test5}. For reasonably large budget,  \textsc{Walk-8} outperforms both \textsc{NetSheild} and \textsc{Walk-6}. Experiments also reveal that \textsc{NetSheild} performs better than our approach for very small values of budget $k$ but as the count of nodes to be immunized increases, its effectiveness degrades.

	\subsection{Run Time Comparison} We also present comparable computational cost while achieving much better quality as one of the merits of our algorithm as discussed in the theoretical time complexity in Section~\ref{section:shieldvalue}. Comparison of runtimes of \textsc{NetSheild}, \textsc{Walk-6} and \textsc{Walk-8} is provided in Figure~\ref{fig:immunizationTime}. Results show that the runtime of our algorithm matches with that of $\textsc{NetSheild}$. The results are reported with $1000$ supernodes ($t$) in summary graphs. 
	
	\pgfplotsset{title style={at={(0.8,0.05)}}}
	\pgfplotsset{compat=1.5}
	\pgfplotsset{every x tick label/.append style={font=\tiny}}
	\noindent
	\begin{figure}[H]
		\centering\footnotesize
		\begin{tikzpicture}
		\begin{axis}[title={Amazon},
		xlabel={$k$},
		ylabel={Time Taken(s)}, ylabel shift=-3pt,
		ymin=0,ymax=5,
		xtick={0, 200, 400, 600, 800,1000},
		ytick={0,1,2,3,4,5},
		height=0.45\columnwidth, width=0.41\columnwidth, grid=major,
		legend columns=-1,
		x tick label style={/pgf/number format/.cd,%
			set thousands separator={}},
		legend style={
			column sep=1ex,
		},
		mark repeat={2},
		legend to name=commonlegend_time
		]
		\addplot+[blue,mark size=1.25pt, mark = *, solid ,line width=0.9] table[x={x},y={walk6_time}]{amazon_time_tikz.csv};
		\addplot+[red,mark size=1.25pt, mark = square*, solid ,line width=0.9] table[x={x},y={walk8_time}]{amazon_time_tikz.csv};
		\addplot+[olive,mark size=1.25pt, mark = triangle*, solid ,line width=0.9] table[x={x},y={ns_time}]{amazon_time_tikz.csv};
		\end{axis}
		\end{tikzpicture}
		\begin{tikzpicture}
		\begin{axis}[title={Gowalla},
		xlabel={$k$},
		ymin=0,ymax=5,
		height=0.45\columnwidth, width=0.41\columnwidth, grid=major,
		legend columns=-1,
		x tick label style={/pgf/number format/.cd,%
			set thousands separator={}},
		legend style={
			column sep=1ex,
		},
		xtick={0, 200, 400, 600, 800,1000},
		ytick={0,1,2,3,4,5},
		mark repeat={2},
		legend to name=commonlegend_time
		]
		\addplot+[blue,mark size=1.25pt, mark = *, solid ,line width=0.9] table[x={x},y={walk6_time}]{gowalla_time_tikz.csv};
		\addplot+[red,mark size=1.25pt, mark = square*, solid ,line width=0.9] table[x={x},y={walk8_time}]{gowalla_time_tikz.csv};
		\addplot+[olive,mark size=1.25pt, mark = triangle*, solid ,line width=0.9] table[x={x},y={ns_time}]{gowalla_time_tikz.csv};
		\end{axis}
		\end{tikzpicture}
		\begin{tikzpicture}
		\begin{axis}[title={Youtube},
		xlabel={$k$},
		height=0.45\columnwidth, width=0.41\columnwidth, grid=major,
		legend columns=-1,
		x tick label style={/pgf/number format/.cd,%
			set thousands separator={}},
		legend style={
			column sep=1ex,
		},
		xtick={0, 200, 400, 600, 800,1000},
		ytick={0,4,8,12,16,20,22},
		legend entries={\textsc{Walk-$6$}, \textsc{Walk-$8$}, \textsc{NetSheild}},
		mark repeat={2},
		legend to name=commonlegend_time
		]
		\addplot+[blue,mark size=1.25pt, mark = *, solid ,line width=0.9] table[x={x},y={walk6_time}]{youtube_time_tikz.csv};
		\addplot+[red,mark size=1.25pt, mark = square*, solid ,line width=0.9] table[x={x},y={walk8_time}]{youtube_time_tikz.csv};
		\addplot+[olive,mark size=1.25pt, mark = triangle*, solid ,line width=0.9] table[x={x},y={ns_time}]{youtube_time_tikz.csv};
		\end{axis}
		\end{tikzpicture}	
		\ref{commonlegend_time}
		\caption{Comparison of time taken (in seconds) to immunize graphs using $\textsc{NetSheild}$, $\textsc{Walk-6}$ and $\textsc{Walk-8}$ approach against the number of nodes to be immunized $(k)$. Results are reported on summaries with $1000$ supernodes.}
		\label{fig:immunizationTime}
	\end{figure}
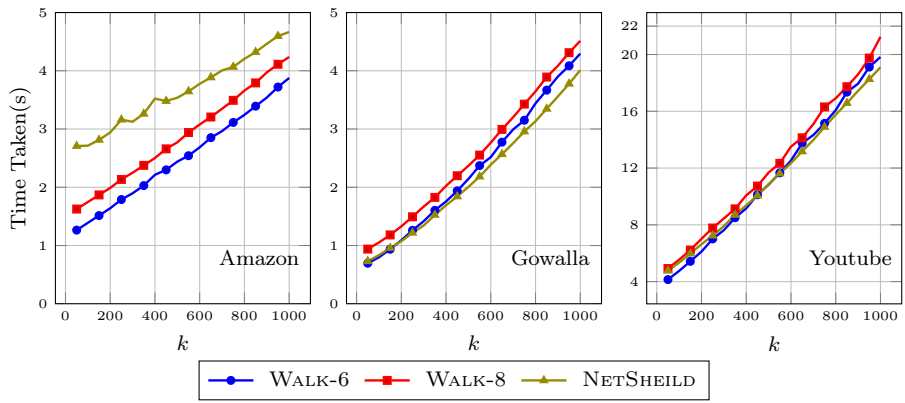
	
	Recall that runtime of our algorithm is $O(n + |E(G)|+t^3+nk^2)$, where the first three terms comprise runtime of constructing a summary of order $t$ and computing the $\cW_{8}(v,G)$ for all $v \in V(G)$, while the last term ($nk^2$) is the runtime to select the best $k$ nodes (\textsc{NetSheild} also requires $O(nk^2)$ for this task). Hence runtime of our algorithm depends quadratically only on $k$, which generally is a small constant. We note that our runtime is superior to that of \textsc{NetSheild} in the sense that in relatively less time we achieve significantly more eigendrop even for a small value of $t$ (see Figure \ref{fig:exactvsapproxcomparison}). 
	
	
	
	\section{Conclusion}
	
	In this work, we address the problem of finding a small subset of nodes in a network whose immunization results in a significant reduction in network vulnerability towards the spread of undesirable content. We explored the relationships between spectral and graph-theoretic properties of networks and exploit these relationships to design an efficient algorithm to find crucial nodes in the network. We select a subset of nodes for immunization based on the number of closed walks of length $8$. With the use of easily computable local graph properties and approximation techniques, the running time of our technique is linear in the size of the graph. Thus, our method is scalable and can be applied to large graphs.
	Experiments on large real-world networks suggest that our algorithm provides better results than previously employed methods and is significantly faster in terms of time complexity. The approximation quality comparison shows that our method is a close approximation of the exact solution. Experimental results for various quality measures like virus spread simulation, reduction in network vulnerability and the run time comparison show that our method performs better than the state of the art solution.
	
	Potential extensions of this work include i) utilizing specialized graph summarization methods, this will further reduce computational cost as well as improve immunization performance of the solution ii) extending this work to incorporate dynamic graphs. Dynamic graphs evolve with time and edges are added/removed iii) exploring non-preemptive graph immunization approaches, where the immunization process starts after the virus attack and the information of infected nodes is available. 
	
	
	\bibliography{combinatorial_immunization_IS_Revised}
	
\end{document}